\newtheorem{prp}{Proposition}
\newtheorem{prb}{Problem}
\newcommand{\Cost}{\mathrm{Cost}}
\newcommand{\DP}{\mathrm{DP}}
\title{$r$-Gather Clustering and $r$-Gathering on Spider: FPT Algorithms and Hardness}
\author{Soh Kumabe}{The University of Tokyo \\ RIKEN AIP}{soh\_kumabe@mist.i.u-tokyo.ac.jp}{}{}
\author{Takanori Maehara}{RIKEN AIP}{takanori.maehara@riken.jp}{}{}
\authorrunning{S.\,Kumabe and T.\,Maehara}
\keywords{$r$-Gather Clustering; $r$-Gathering; Spider, Fixed-Parameter Tractability; NP-Hardness}
\begin{document}

\maketitle

\begin{abstract}
We consider min-max $r$-gather clustering problem and min-max $r$-gathering problem.
In the min-max $r$-gather clustering problem, we are given a set of users and divide them into clusters with size at least $r$; the goal is to minimize the maximum diameter of clusters.
In the min-max $r$-gathering problem, we are additionally given a set of facilities and assign each cluster to a facility; the goal is to minimize the maximum distance between the users and the assigned facility.
In this study, we consider the case that the users and facilities are located on a ``spider'' and propose the first fixed-parameter tractable (FPT) algorithms for both problems, which are parametrized by only the number of legs.
Furthermore, we prove that these problems are NP-hard when the number of legs is arbitrarily large.
\end{abstract}

\section{Introduction}

\subsection{Background and Motivation}

We consider min-max $r$-gather clustering problem and min-max $r$-gathering problem.
In the \emph{min-max $r$-gather clustering problem},
we are given a metric space $\mathcal{M}$ and $n$ users located on $\mathcal{M}$. 
The goal of the problem is to find a partition of the users such that each cluster contains at least $r$ users, and the maximum diameter of the clusters is minimized\footnote{Several variations of $r$-gather clustering problem are known.
At first appearance, it is the special case of $r$-gathering problem, where the set of locations of the users are the subset of the set of locations of the facilities~\cite{aggarwal2010achieving}.
In recent researches on the restricted metrics, the same definition as this paper is often used.
If the metric is a subclass of the tree, that corresponds to the situation, where we can open facilities at any points on the metric.
As the clustering problem, this is a more natural way to define the cost, since we do not need the set of facilities as an input.
Thus we use this definition.
}~\cite{akagi2015r,nakano2018simple,ahmed2019r}. 
In the \emph{min-max $r$-gathering problem}, we are additionally given $m$ facilities on $\mathcal{M}$. 
The goal of the problem is to open a subset of facilities and assign each user to an opened facility so that all opened facilities have at least $r$ users and the maximum distance from users to the assigned facilities is minimized~\cite{armon2011min}.

Both problems have plenty of practical applications.
One typical application is a privacy protection~\cite{sweeney2002k}:
Consider a company that publishes clustered data about their customers.
If there is a too small cluster, one can easily identify the individuals in a such cluster.
To guarantee the anonymity, we require the clusters to have at least $r$ individuals, which is obtained via $r$-gather clustering problem.
Another example is the formation of sport-teams.
Suppose that we want to divide people into several teams for a football game.
Then, each team must contain at least eleven people.
Such grouping can be obtained via $11$-gather clustering problem.

Several tractability and intractability results are known for both problems.
If $\mathcal{M}$ is a general metric space, there is a $3$-approximation algorithm for $r$-gathering problem, and no algorithm can achieve a better approximation ratio unless P=NP~\cite{armon2011min}.
If the set of locations of the users is a subset of that of the facilities\footnote{This version problem is originally called $r$-gather clustering problem~\cite{aggarwal2010achieving}.}, there is a $2$-approximation algorithm~\cite{aggarwal2010achieving}, and no algorithm can achieve a better approximation ratio unless P=NP~\cite{armon2011min}.
%
If $\mathcal{M}$ is a line, there are polynomial time exact algorithms by dynamic programming for $r$-gathering problem~\cite{akagi2015r,han2016r,nakano2018simple,sarker2019r}, where the fastest one runs in linear time~\cite{sarker2019r}.
This technique can also be applied to the $r$-gather clustering problem.
If $\mathcal{M}$ is a \emph{spider}, which is a metric space constructed by joining $d$ half-line-shaped metrics together at endpoints, these are polynomial-time algorithms if $d$ is a constant~\cite{ahmed2019r} (i.e., these are XP algorithms).

Thus far, the best tractability result is the XP algorithms on a spider, and the best intractability is NP-hardness on a general metric space.
The purpose of this study is to close this gap.

\subsection{Our Contribution}

In this study, we close the gap between the tractability and intractability of the problems on a spider. 
We first propose fixed-parameter tractable algorithms for both $r$-gather clustering problem and $r$-gathering problem parameterized by $d$. 
\begin{theorem}
\label{tractable}
There is an algorithm to solve the $r$-gather clustering problem on a spider in $O(2^dr^4d^5 + n)$ time, where $d$ is the degree of the center of the spider.
Similarly, there is an algorithm to solve the $r$-gathering problem on a spider in $O(2^dr^4d^5 + n + m)$ time.
\end{theorem}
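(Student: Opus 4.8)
The plan is to reduce the optimization problem to its decision version and solve the latter, recovering the optimum by searching over candidate thresholds. For a fixed threshold $t$, the decision problem asks whether the users admit a partition into clusters of size at least $r$ and diameter at most $t$. The optimal objective is always realized by a within-leg distance $|x-y|$ or a cross-leg distance $x+y$ between two users, so I would locate it by binary search over candidate values, reduced to $\mathrm{poly}(r,d)$ many after the core reduction described below. A first normalization is that every cluster may be assumed to have size in $[r,2r-1]$: splitting a cluster of size at least $2r$ into two clusters of size at least $r$ never increases a diameter, since a subset has no larger diameter, so the optimum is preserved.

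Next I would sort the users on each leg by distance to the center and call a cluster \emph{central} if it contains users on at least two legs. The diameter of a central cluster is the sum of its two largest per-leg maxima, so all its users lie within distance $t$ of the center. The crux is a structural lemma: there is an optimal solution in which every non-central cluster is an interval on its leg, and the central clusters admit a \emph{layered} (nested) arrangement near the center. The intended argument is an exchange/uncrossing argument — whenever two clusters cross near the center, the users closer to the center can be reassigned to the inner cluster without increasing any diameter. Granting this, only about $O(r)$ users per leg need to participate in central clusters (a leg with at least $r$ near users can self-cluster all but a remainder of size less than $r$), so the cross-leg coupling lives on a \emph{core} of $O(rd)$ users. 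The remaining far users of each leg are partitioned by an independent line dynamic program whose net effect is summarized by a single boundary index per leg, computable in $O(n)$ total time.

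On the core I would run a dynamic program sweeping the layers from the center outward. The combinatorial difficulty is that a single central layer may join an arbitrary subset of legs; to decouple the legs I would enumerate, for the interface between consecutive layers, the subset $S\subseteq[d]$ of legs a layer spans, which is where the factor $2^d$ enters. For each such subset the state records how far along each participating leg the sweep has advanced and the residual size still needed to reach $r$, all bounded by $O(r)$; together with $O(d)$ legs and a bounded number of interface quantities this accounts for the polynomial overhead $r^4d^5$, giving $O(2^d r^4 d^5)$ on the core and $O(2^d r^4 d^5 + n)$ overall after adding the linear-time far-user processing.

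For the $r$-gathering problem the only change is that each cluster must be assigned an opened facility, and a cluster's cost is the maximum user–facility distance rather than its diameter; the same layered structure holds with ``distance to the center'' replaced by ``distance to the serving facility,'' and the facilities reduce to $\mathrm{poly}(r,d)$ relevant ones near the center, contributing the extra $O(m)$ reading term. I expect the main obstacle to be the structural lemma of the second step — establishing that central clusters can be made layered and confined to a core of $O(rd)$ users — together with arranging the core dynamic program so that its state factorizes across legs only through the subset-indexed interface, keeping the dependence at $2^d\cdot\mathrm{poly}(r,d)$ rather than the naive $r^{\Theta(d)}$.
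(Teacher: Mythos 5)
Your sketch correctly reproduces the easy normalizations (cluster sizes in $[r,2r-1]$, multi-leg users within distance $t$ of the center, per-leg prefix/suffix split between multi-leg and single-leg clusters), but the two things you defer — the structural lemma and the DP state that ``factorizes across legs'' — are precisely the content of the paper's proof, and your proposal leaves both unresolved, as you yourself concede in the final sentence. The paper's key structure (a reformulation of Lemmas~3 and~8 of Ahmed et al.\ via Nakano's Lemma~2) is strictly stronger than ``nested layers'': the multi-leg clusters can be made \emph{suffix-special}, meaning each one splits into a \emph{ball part} that is a prefix of the \emph{single global ordering} of all users by distance to the center, plus a \emph{segment part} that is a contiguous run on one leg which is then permanently closed. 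A merely nested arrangement would still force the DP to remember one frontier per participating leg, i.e.\ $r^{\Theta(d)}$ states; it is the global-prefix property that collapses all leg frontiers into one index, yielding the paper's state $(i,S,j,k)$ — global position, set of still-open legs, current cluster size, last ball-part user — and hence the $2^d\cdot\mathrm{poly}(r,d)$ bound. The uncrossing argument you gesture at is not routine here, because exchanges must simultaneously preserve cluster sizes at least $r$ and the ``contains all remaining users of the closed leg'' condition. Separately, your claim that only $O(r)$ users per leg join central clusters is unjustified: one leg can feed the ball parts of up to $d$ different multi-leg clusters, which is why the paper only truncates to the first $(2r-1)d$ users per leg (an $O(rd^2)$ core), and this is the arithmetic the $O(2^dr^4d^5)$ bound actually rests on.

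The binary-search framing is also a wrong turn relative to the claimed running time. The paper solves the optimization problem directly: it precomputes the optimal line-case values $R^+(u_i),R^-(u_i)$ for every suffix of every leg in linear time (via Sarker et al.) and lets the DP minimize the maximum of these with the multi-leg costs $\Cost(v,u_k)$, so no threshold search occurs. In your reduction the optimum can be spanned by two \emph{far} users inside a single-leg cluster, so the candidate set is $\Theta(n^2)$ pairwise distances, not $\mathrm{poly}(r,d)$; searching it would add at least a logarithmic factor over $O(2^dr^4d^5+n)$. Likewise, for $r$-gathering the paper does not discard facilities but precomputes all $O(r^2d^4)$ values $\Cost(v,u)$ by sorting cluster midpoints per leg and sweeping the facility list with a two-pointer technique in $O(r^2d^4\log(rd)+m)$ time; your ``$\mathrm{poly}(r,d)$ relevant facilities'' claim would need an argument of comparable care to reach $O(2^dr^4d^5+n+m)$.
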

The proof is given in Section~3.

We then show that the problem is NP-hard; this is our main theoretical contribution.
\begin{theorem}\label{hardness}
The min-max $r$-gather clustering problem and min-max gathering problem are NP-hard even if the input is a spider.
\end{theorem}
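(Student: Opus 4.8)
The plan is to prove both statements through their \emph{decision} versions (given a budget $C$, does a feasible solution of cost at most $C$ exist?), which are polynomially equivalent to the optimization versions by binary search over the $O(n^2)$ candidate distances. I would reduce from $3$-Partition: given positive integers $a_1,\dots,a_{3m}$ with each $a_i\in(B/4,B/2)$ and $\sum_i a_i=mB$, decide whether the indices split into $m$ triples each summing to $B$. The point of using this \emph{strongly} NP-hard problem is that it lets me encode each $a_i$ in \emph{unary} (as $a_i$ co-located ``number users'') while keeping the instance polynomial, so that the gather threshold $r$ reads off additive sizes. The governing structural fact is that on a spider two points lying on distinct legs at distances $x,y$ from the center are at distance $x+y$, so clusters can only merge across legs near the center; this is exactly what lets the center act as a shared junction while far-out points stay mutually unreachable.

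The construction places $m$ mandatory ``bins'' at the tips of $m$ legs, at distance $D$ from the center, and places all $3m$ number users close to the center, within distance $C-D$ of it. I calibrate the budget so that $2D>C\ge D$: then no two bins can ever share a cluster (their distance is $2D>C$), while every bin still reaches every number user through the center (distance $D+y\le C$). For the $r$-gather clustering problem a bin is simply a far \emph{anchor user}, which is mandatory because every user must lie in some valid cluster; for the $r$-gathering problem a bin is a far \emph{facility} equipped with a few captive users that can be served only by it, forcing it open. In either case there are $m$ mandatory bins that cannot be combined, so every bin must absorb number users to reach its quota; setting $r$ to offset the anchor/captive sizes makes each bin need absorbed mass exactly $B$. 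Since the total number-mass is precisely $mB$, feasibility forces each of the $m$ bins to take exactly $B$, and $a_i\in(B/4,B/2)$ then makes each such bin a triple. Completeness (a $3$-Partition gives a solution of cost $\le C$) is immediate.

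The main obstacle is \emph{rigidity} in the soundness direction: I must rule out every feasible solution except the honest ones, in which each number is assigned \emph{as a whole} to a single bin. This is genuinely delicate precisely because the objective is a bottleneck (a maximum of pairwise distances), which on its own neither penalizes over-concentration nor forbids splitting a bundle: co-located users can be divided among two bins at no diameter cost, and if splitting were allowed a bin could reach its mass $B$ using \emph{fractions} of numbers, producing a filling that is not a genuine triple. Making the bins mandatory already defeats over-concentration (one cannot dump everything into a single cluster, since the remaining anchors would starve), so the whole difficulty concentrates on forbidding splits. I would handle this by giving each number its own dedicated leg and attaching a small rigid guard gadget—a short chain of users together with a few guard points that are too few to be feasible alone—placed at distances chosen so that any attempt either to split a number between two bins or to overfill a bin with a fourth number necessarily creates a pair of points at distance strictly greater than $C$, or strands an undersized residual that cannot reach $r$. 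The heart of the proof is the case analysis verifying that this calibration leaves the intended whole-number triple assignments as the \emph{unique} budget-respecting feasible configurations.

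Running this construction for both problems—bins as far anchor users for clustering, bins as far facilities with captive users for gathering—yields the same combinatorial skeleton and the same rigidity analysis, so a single argument covers both. Since the reduction uses $\Theta(m)$ legs, is computable in polynomial time, and $3$-Partition is strongly NP-hard (so the unary user placement stays polynomial), both the min-max $r$-gather clustering problem and the min-max $r$-gathering problem are NP-hard once the number of legs $d$ is unbounded, establishing Theorem~\ref{hardness}.
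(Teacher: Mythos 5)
Your proposal has a genuine gap, and it sits exactly where you locate ``the heart of the proof'': the rigidity gadget that is supposed to forbid splitting a number between two bins is never constructed, only postulated, and the structural facts of the spider metric indicate that no such gadget exists for your skeleton. On a spider, every inter-leg distance is $x+y$ through the center, so from the viewpoint of a number user all bins at distance $D$ are \emph{geometrically indistinguishable}: whether a user at distance $y$ may join a bin cluster depends only on the threshold test $D+y\le C$, never on \emph{which} bin. Consequently, feasibility of a solution is invariant under rebalancing co-located (or near-center) number users among the bins that reach them --- the bottleneck objective literally cannot see a split, and your base construction is a yes-instance whenever $\sum_i a_i = mB$, independent of whether a $3$-partition exists. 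The only lever available to make users non-interchangeable is their distance from the center, and that yields a \emph{laminar} (nested threshold) reachability structure. If you try to escape by pushing number $i$'s users far out (say $y_i > C/2$) so that they become rigid, then any two users on distinct number legs are at distance $y_i + y_j > C$, so no cluster can contain users from two different number legs --- and then a bin can never collect three whole numbers, killing the intended ``triple'' semantics. So the two requirements of your gadget (a number can reach several bins; a number cannot be split) pull in opposite directions, and the ``short chain of users plus guard points'' sketch does not resolve the tension: a guard group of size $r-a_i$ confined to leg $i$ swallows \emph{all} $a_i$ number users (so none reach any bin), a guard group of size $r$ is inert, and any filler users reintroduce free splitting.

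This obstruction is precisely why the paper does not reduce from $3$-Partition. What spider geometry \emph{can} express is a per-leg linked choice of ``(amount pushed off the leg, distance threshold for where it may go)'': the long-leg gadget with $r$ anchor users at $4L-a_{i,|S_i|}+1$ and graded groups at $2L-a_{i,k}$ forces an end-cluster border whose discrete position $z_i$ determines that exactly $p_{i,z_i}$ users must be absorbed by short legs of length at most $a_{i,z_i}$, while the short legs encode cumulative budgets $(b_j,q_j)$. The resulting combinatorial problem is the authors' \emph{arrears problem} --- per-item (payment, deadline) choices against nested budget constraints, matching the laminar structure above --- and even its strong NP-hardness is nontrivial: the paper proves it from 1-IN-3SAT with a base-$B$ digit encoding, not from a partition-type problem. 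Your peripheral points are sound (decision version via binary search over candidate distances, using a \emph{strongly} NP-hard source so the unary user counts stay polynomial, clustering as a special case of gathering on spiders), but the soundness direction --- the unique part of the argument that carries the theorem --- is missing, and the approach as framed cannot be completed without replacing the bins-and-guards skeleton by something with per-leg quantized choices of the arrears type.
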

The proof is given in Section~4.
Note that, on a spider, $r$-gather clustering problem is the special case of $r$-gathering problem because we can reduce the former into latter by putting facilities for all midpoints of two users.
Thus we only prove the NP-hardness only for $r$-gather clustering problem.

\section{Preliminaries}

A \emph{spider} $\mathcal{L}=\{l_1,\dots,l_d\}$ is a set of half-lines that share the endpoint $o$ (see Figure~(a)).
Each half-line is called a \emph{leg} and $o$ is called the \emph{center}.
We denote by $(l, x) \in \mathcal{L} \times \mathbb{R}_+$ the point on leg $l$ whose distance from the center is $x$.
Note that $(l, 0)$ is the center for all $l$.
The metric on $\mathcal{L}$ is defined by $d((l, x), (l', x')) = |x - x'|$ if $l = l'$ and $x + x'$ if $l \neq l'$.

Let $\mathcal{U}=\{u_1, \dots, u_n\}$ be a set of $n$ users on $\mathcal{L}$. 
A \emph{cluster} is a subset of users and the \emph{diameter} of a cluster is the distance between two farthest users in the cluster.

The \emph{min-max $r$-gather clustering on spider} is a problem to find a partition of users into disjoint clusters so that each cluster contains at least $r$ users (Figure~(b)).
The goal is to minimize the maximum diameter of the clusters (Figure~(c)).
When the meaning is clear, we simply call it \emph{$r$-gather clustering problem}.
The \emph{min-max $r$-gathering on spider} additionally specifies a set of $m$ facilities on $\mathcal{L}$.
The goal of this problem is to open a subset of facilities and assign each user to an opened facility so that each cluster that corresponds to a facility contains at least $r$ users (Figure~(d)).
The goal is to minimize the maximum distance from users to the assigned facilities (Figure~(e)).
When the meaning is clear, we simply call it {\it $r$-gathering}.


\begin{figure}[hbtp] 
\begin{center}
\begin{tabular}{c}
    \begin{minipage}{0.33\hsize}
        \begin{center}
{\unitlength 0.1in%
\begin{picture}(20.0000,20.0000)(0.0000,-20.0000)%
%
\special{pn 13}%
\special{pa 0 1000}%
\special{pa 1000 1000}%
\special{fp}%
\special{pa 1000 0}%
\special{pa 1000 1000}%
\special{fp}%
\special{pa 1000 1000}%
\special{pa 2000 400}%
\special{fp}%
\special{pa 1000 1000}%
\special{pa 2000 1600}%
\special{fp}%
\special{pa 1000 1000}%
\special{pa 1000 2000}%
\special{fp}%
\end{picture}}%
          \hspace{1.6cm} (a) Spider
        \end{center}
    \end{minipage}
    \begin{minipage}{0.33\hsize}
        \begin{center}
{\unitlength 0.1in%
\begin{picture}(20.0000,20.0000)(0.0000,-20.0000)%
%
\special{pn 13}%
\special{pa 0 1000}%
\special{pa 1000 1000}%
\special{fp}%
\special{pa 1000 0}%
\special{pa 1000 1000}%
\special{fp}%
\special{pa 1000 1000}%
\special{pa 2000 400}%
\special{fp}%
\special{pa 1000 1000}%
\special{pa 2000 1600}%
\special{fp}%
\special{pa 1000 1000}%
\special{pa 1000 2000}%
\special{fp}%
%
\special{pn 8}%
\special{ar 240 1000 40 40 0.0000000 6.2831853}%
%
\special{sh 1.000}%
\special{ia 240 1000 40 40 0.0000000 6.2831853}%
\special{pn 8}%
\special{ar 240 1000 40 40 0.0000000 6.2831853}%
%
\special{sh 1.000}%
\special{ia 675 1000 40 40 0.0000000 6.2831853}%
\special{pn 8}%
\special{ar 675 1000 40 40 0.0000000 6.2831853}%
%
\special{sh 1.000}%
\special{ia 1000 440 40 40 0.0000000 6.2831853}%
\special{pn 8}%
\special{ar 1000 440 40 40 0.0000000 6.2831853}%
%
\special{pn 8}%
\special{ar 1000 440 40 40 0.0000000 6.2831853}%
%
\special{pn 8}%
\special{ar 1000 440 40 40 0.0000000 6.2831853}%
%
\special{pn 8}%
\special{ar 1000 440 40 40 0.0000000 6.2831853}%
%
\special{sh 1.000}%
\special{ia 1950 1570 40 40 0.0000000 6.2831853}%
\special{pn 8}%
\special{ar 1950 1570 40 40 0.0000000 6.2831853}%
%
\special{sh 1.000}%
\special{ia 1810 1485 40 40 0.0000000 6.2831853}%
\special{pn 8}%
\special{ar 1810 1485 40 40 0.0000000 6.2831853}%
%
\special{sh 1.000}%
\special{ia 1610 1365 40 40 0.0000000 6.2831853}%
\special{pn 8}%
\special{ar 1610 1365 40 40 0.0000000 6.2831853}%
%
\special{sh 1.000}%
\special{ia 1000 790 40 40 0.0000000 6.2831853}%
\special{pn 8}%
\special{ar 1000 790 40 40 0.0000000 6.2831853}%
%
\special{sh 1.000}%
\special{ia 1000 545 40 40 0.0000000 6.2831853}%
\special{pn 8}%
\special{ar 1000 545 40 40 0.0000000 6.2831853}%
%
\special{sh 1.000}%
\special{ia 1585 645 40 40 0.0000000 6.2831853}%
\special{pn 8}%
\special{ar 1585 645 40 40 0.0000000 6.2831853}%
%
\special{sh 1.000}%
\special{ia 1935 440 40 40 0.0000000 6.2831853}%
\special{pn 8}%
\special{ar 1935 440 40 40 0.0000000 6.2831853}%
%
\special{sh 1.000}%
\special{ia 1390 1230 40 40 0.0000000 6.2831853}%
\special{pn 8}%
\special{ar 1390 1230 40 40 0.0000000 6.2831853}%
%
\special{sh 1.000}%
\special{ia 1000 1455 40 40 0.0000000 6.2831853}%
\special{pn 8}%
\special{ar 1000 1455 40 40 0.0000000 6.2831853}%
%
\special{sh 1.000}%
\special{ia 995 1615 40 40 0.0000000 6.2831853}%
\special{pn 8}%
\special{ar 995 1615 40 40 0.0000000 6.2831853}%
\end{picture}}%
          \hspace{1.6cm} (b) An instance of $r$-gather clustering problem (Black points represents users)
        \end{center}
    \end{minipage}
    \begin{minipage}{0.33\hsize}
        \begin{center}
          \input{clusterans.tex}
          \hspace{1.6cm} (c) An example solution of $r$-gather clustering problem, where $r=3$
        \end{center}
    \end{minipage}
\end{tabular}
\end{center}
\end{figure}

\begin{figure}[hbtp] 
\begin{center}
\begin{tabular}{c}
    \begin{minipage}{0.33\hsize}
        \begin{center}
{\unitlength 0.1in%
\begin{picture}(20.0000,20.0000)(0.0000,-20.0000)%
%
\special{pn 13}%
\special{pa 0 1000}%
\special{pa 1000 1000}%
\special{fp}%
\special{pa 1000 0}%
\special{pa 1000 1000}%
\special{fp}%
\special{pa 1000 1000}%
\special{pa 2000 400}%
\special{fp}%
\special{pa 1000 1000}%
\special{pa 2000 1600}%
\special{fp}%
\special{pa 1000 1000}%
\special{pa 1000 2000}%
\special{fp}%
%
\special{pn 8}%
\special{ar 240 1000 40 40 0.0000000 6.2831853}%
%
\special{sh 1.000}%
\special{ia 240 1000 40 40 0.0000000 6.2831853}%
\special{pn 8}%
\special{ar 240 1000 40 40 0.0000000 6.2831853}%
%
\special{sh 1.000}%
\special{ia 675 1000 40 40 0.0000000 6.2831853}%
\special{pn 8}%
\special{ar 675 1000 40 40 0.0000000 6.2831853}%
%
\special{sh 1.000}%
\special{ia 1000 440 40 40 0.0000000 6.2831853}%
\special{pn 8}%
\special{ar 1000 440 40 40 0.0000000 6.2831853}%
%
\special{pn 8}%
\special{ar 1000 440 40 40 0.0000000 6.2831853}%
%
\special{pn 8}%
\special{ar 1000 440 40 40 0.0000000 6.2831853}%
%
\special{pn 8}%
\special{ar 1000 440 40 40 0.0000000 6.2831853}%
%
\special{sh 1.000}%
\special{ia 1950 1570 40 40 0.0000000 6.2831853}%
\special{pn 8}%
\special{ar 1950 1570 40 40 0.0000000 6.2831853}%
%
\special{sh 1.000}%
\special{ia 1810 1485 40 40 0.0000000 6.2831853}%
\special{pn 8}%
\special{ar 1810 1485 40 40 0.0000000 6.2831853}%
%
\special{sh 1.000}%
\special{ia 1610 1365 40 40 0.0000000 6.2831853}%
\special{pn 8}%
\special{ar 1610 1365 40 40 0.0000000 6.2831853}%
%
\special{sh 1.000}%
\special{ia 1000 790 40 40 0.0000000 6.2831853}%
\special{pn 8}%
\special{ar 1000 790 40 40 0.0000000 6.2831853}%
%
\special{sh 1.000}%
\special{ia 1000 545 40 40 0.0000000 6.2831853}%
\special{pn 8}%
\special{ar 1000 545 40 40 0.0000000 6.2831853}%
%
\special{sh 1.000}%
\special{ia 1585 645 40 40 0.0000000 6.2831853}%
\special{pn 8}%
\special{ar 1585 645 40 40 0.0000000 6.2831853}%
%
\special{sh 1.000}%
\special{ia 1935 440 40 40 0.0000000 6.2831853}%
\special{pn 8}%
\special{ar 1935 440 40 40 0.0000000 6.2831853}%
%
\special{sh 1.000}%
\special{ia 1390 1230 40 40 0.0000000 6.2831853}%
\special{pn 8}%
\special{ar 1390 1230 40 40 0.0000000 6.2831853}%
%
\special{sh 1.000}%
\special{ia 1000 1455 40 40 0.0000000 6.2831853}%
\special{pn 8}%
\special{ar 1000 1455 40 40 0.0000000 6.2831853}%
%
\special{sh 1.000}%
\special{ia 995 1615 40 40 0.0000000 6.2831853}%
\special{pn 8}%
\special{ar 995 1615 40 40 0.0000000 6.2831853}%
%
\special{sh 0}%
\special{ia 995 175 40 40 0.0000000 6.2831853}%
\special{pn 4}%
\special{ar 995 175 40 40 0.0000000 6.2831853}%
%
\special{sh 0}%
\special{ia 1510 1305 40 40 0.0000000 6.2831853}%
\special{pn 4}%
\special{ar 1510 1305 40 40 0.0000000 6.2831853}%
%
\special{sh 0}%
\special{ia 1500 705 40 40 0.0000000 6.2831853}%
\special{pn 4}%
\special{ar 1500 705 40 40 0.0000000 6.2831853}%
%
\special{sh 0}%
\special{ia 1715 1430 40 40 0.0000000 6.2831853}%
\special{pn 4}%
\special{ar 1715 1430 40 40 0.0000000 6.2831853}%
%
\special{sh 0}%
\special{ia 350 995 40 40 0.0000000 6.2831853}%
\special{pn 4}%
\special{ar 350 995 40 40 0.0000000 6.2831853}%
\end{picture}}%
          \hspace{1.6cm} (d) An instance of $r$-gathering (black and white points represent users and facilities, respectively)
        \end{center}
    \end{minipage}
    \begin{minipage}{0.33\hsize}
        \begin{center}
          \input{r-gatheringans.tex}
          \hspace{1.6cm} (e) An example solution of $r$-gathering, where $r=3$ (Bold borders represent Opened facilities)
        \end{center}
    \end{minipage}
\end{tabular}
\end{center}
\end{figure}

\section{FPT Algorithm for $r$-Gather Clustering and $r$-Gathering on Spider}

We give an FPT algorithm to solve min-max $r$-gather clustering problem and $r$-gathering problem on spider parametrized by the number of legs $d$.
First, we exploit the structure of optimal solutions.
After that, we give a brute-force algorithm. Finally, we accelerate it by dynamic programming.

We denote the coordinate of user $u$ by $(l(u),x(u))$.
Without loss of generality, we assume that $x(u_1) \le \dots \le x(u_n)$. 
We refer this order to explain a set of users: for example, ``the first/last $k$ users on leg $l$'' means the users with $k$ smallest/largest index among all users on leg $l$. Users on the center are located on every leg, but we choose an arbitrary leg and consider that all the users are located on this leg. Thus every user is considered to be located on exactly one leg.

We introduce a basic lemma about the structure of a solution.
A cluster is \emph{single-leg} if it contains users from a single leg; otherwise, it is \emph{multi-leg}.
Ahmed et al.~\cite{ahmed2019r} showed that there is an optimal solution that has a specific single-leg/multi-leg structure as follows.

\begin{lemma}[{\rm \cite[Lemma~2]{ahmed2019r}}]
For both $r$-gather clustering problem and $r$-gathering problem, there is an optimal solution such that, for all leg $l$, first some users on $l$ are contained in multi-leg clusters and rest of them are contained in single-leg clusters.
\end{lemma}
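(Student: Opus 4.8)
The plan is to prove the statement by a local exchange argument driven by a potential function, carried out in detail for $r$-gather clustering; the $r$-gathering case follows along the same lines with opened facilities playing the role of cluster centers. Fix an optimal solution and write $t$ for its value (the maximum cluster diameter). Recall that the users on each leg are sorted by their distance $x(\cdot)$ from the center, so a ``prefix'' consists of the users closest to the center. I measure how far a solution is from the desired shape by the potential $\Phi=\sum_w \mathrm{rank}(w)$, where the sum ranges over all users $w$ lying in multi-leg clusters and $\mathrm{rank}(w)$ is the global position of $w$ in the sorted order (ties broken by index). Among all optimal solutions I take one minimizing $\Phi$; since there are finitely many clusterings and $\Phi$ is a nonnegative integer, a minimizer exists, and I claim it has the asserted structure. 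If not, then on some leg $l$ the set of users in multi-leg clusters is not a prefix, so there are a user $u$ in a single-leg cluster $C$ and a user $v$ in a multi-leg cluster $D$, both on $l$, with $\mathrm{rank}(u)<\mathrm{rank}(v)$ and hence $x(u)\le x(v)$.

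First I record the key observation that every user $w$ in a multi-leg cluster satisfies $x(w)\le t$: its cluster contains a user on a different leg, and the distance to that user is at least $x(w)$ since it crosses the center, so $x(w)\le t$. In particular $x(v)\le t$. Now I swap the two users, moving $u$ into $D$ and $v$ into $C$, and verify that neither diameter exceeds $t$. For $D'=(D\setminus\{v\})\cup\{u\}$, any pair not involving $u$ is unchanged; for a pair of $u$ with a user $z$ on another leg, $d(u,z)=x(u)+x(z)\le x(v)+x(z)=d(v,z)\le t$; and for $z$ on leg $l$, both $x(u)$ and $x(z)$ are at most $t$ by the observation, so $d(u,z)=|x(u)-x(z)|\le t$. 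For $C'=(C\setminus\{u\})\cup\{v\}$, which stays single-leg on $l$, I argue on the range: if $v$ is a farthest point of $C'$, its range is at most $x(v)\le t$; otherwise the maximum coordinate of $C'$ does not exceed that of $C$ while its minimum coordinate is at least that of $C$ (every retained element is $\ge\min C$, and $x(v)\ge x(u)\ge\min C$), so the range does not exceed that of $C$, which is $\le t$. Cluster sizes are preserved by the one-for-one swap, so feasibility is maintained.

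Hence the swapped solution is again optimal, but $u$ now lies in a multi-leg cluster and $v$ in a single-leg one, so $\Phi$ strictly decreases by $\mathrm{rank}(v)-\mathrm{rank}(u)>0$, contradicting minimality. This establishes the structure for $r$-gather clustering. The main obstacle is precisely the bound on the single-leg cluster $C'$: moving the farther user $v$ inward could a priori stretch $C$, and what rescues the argument is the observation that multi-leg users lie within distance $t$ of the center, combined with the case split on whether $v$ becomes the new extreme point.

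For the $r$-gathering problem the same exchange applies with $D$ and $C$ replaced by the user sets assigned to two opened facilities, and the analogue of the observation is that the facility of any multi-leg cluster lies within distance $t$ of the center; one checks as above that reassigning $u$ to $D$'s facility keeps its distance at most $t$. The delicate point is guaranteeing that $v$ can be reassigned within distance $t$ to the facility of the single-leg cluster it enters, since that facility need not cover the farther point $v$; resolving this requires choosing the exchange partner $u$ on leg $l$ carefully (e.g. the single-leg user just below $v$ whose facility covers it) or, failing a clean pairwise swap, a more global redistribution of leg-$l$ users among the single-leg clusters. I expect this to be the only place where the gathering case needs more than a verbatim copy of the clustering argument.
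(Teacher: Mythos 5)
The paper itself gives no proof of this lemma: it is imported verbatim as Lemma~2 of Ahmed et al.\ \cite{ahmed2019r}, so there is no internal argument to compare against. Judged on its own merits, your $r$-gather clustering half is correct and complete. The three ingredients fit together cleanly: the observation that every user $w$ of a multi-leg cluster satisfies $x(w)\le t$ (its distance to a companion on another leg is $x(w)+x(z)\ge x(w)$), the two-case analysis of the single-leg cluster $C'$ (either $v$ becomes the extreme point, in which case the range is at most $x(v)\le t$, or the range is dominated by that of $C$), and the potential $\Phi=\sum_w \mathrm{rank}(w)$, which strictly decreases under the size-preserving swap and makes the descent well-founded. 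This is a legitimate self-contained proof for the clustering case, which is arguably more than the paper provides.

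The genuine gap is the $r$-gathering half, which the lemma also asserts and which you correctly flag but do not close. The rescue observation fails there: a user of a multi-leg cluster is only guaranteed to satisfy $x(w)\le 2t$ (the facility $f$ satisfies $x_f\le t$, but $w$ may sit a further $t$ beyond $f$ on $f$'s leg), and the facility of the single-leg cluster need not cover $v$. Concretely, let $f_C$ sit at the center serving a single-leg cluster $C\subseteq[0,t]$ on leg $l$, and let $v$ at $x(v)=1.8t$ on leg $l$ be served by $f_D$ at $x_{f_D}=0.9t$, with $D$ containing a user at $x\le 0.1t$ on another leg; then reassigning $u\in C$ to $f_D$ is feasible (as you verify), but $d(v,f_C)=1.8t>t$ for \emph{every} choice of $u\in C$, so no pairwise swap that keeps the opened facilities fixed works, and your fallback of picking ``the single-leg user just below $v$ whose facility covers it'' can be void because no such facility need exist. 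Repairing this requires exchanging suffixes of the leg-$l$ users between the two clusters together with re-selecting or merging the assigned facilities (e.g.\ handing all of $C\cup(D\cap l)$ to $f_D$, which does cover them in the example) --- precisely the kind of argument Ahmed et al.\ carry out. As written, the proposal therefore establishes only half of the stated lemma.
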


For a while, we concentrate on the structure of multi-leg clusters.
Let $C$ be a multi-leg cluster.
Let $u_i$ be the last user in $C$ and $u_j$ be the last user with $l(u_i)\neq l(u_j)$ in $C$.
A \emph{ball part} of $C$ is the set of users whose indices are at most $u_j$ and a \emph{segment part} of $C$ is the set of the remaining users. 
$C$ is \emph{special} if $C$ contains all the users on $l(u_i)$ and the ball part is $\{u_1, \dots, u_k\}$ for some integer $k$.
The list of multi-leg clusters $\{C_1, \dots ,C_t\}$ are {\it suffix-special} if for all $1\leq i\leq t$, $C_i$ is a special when we only consider the users in $C_i, \dots ,C_t$.

The following lemma is the key to our algorithm.
Since it is a reformulation of Lemma~3 and Lemma~8 in Ahmed et al.~\cite{ahmed2019r} using Lemma~2 in Nakano~\cite{nakano2018simple}, we omit the proof.

\begin{lemma}[{\rm Reformulation of \cite[Lemmas~3 and 8]{ahmed2019r} by \cite[Lemma~2]{nakano2018simple}}]
Suppose that there exists an optimal solution without any single-leg cluster.
Then there is an optimal solution such that all the clusters contain at most $2r-1$ users, and there exists a special cluster.
\end{lemma}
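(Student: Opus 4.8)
The plan is to prove the two assertions---the size bound and the existence of a special cluster---by exchange arguments applied to the hypothesized optimal solution $S$ that has no single-leg cluster. The whole argument rests on one geometric fact that I would record first: for a multi-leg cluster $C$, the diameter equals $x(a)+x(b)$, where $a$ is the farthest user of $C$ from the center and $b$ is the farthest user of $C$ on a leg different from $l(a)$. Indeed, a same-leg pair contributes at most $x(a)$, the largest cross-leg sum is $x(a)+x(b)$, and $x(a)+x(b)\ge x(a)$. In the notation of the statement this reads $\mathrm{diam}(C)=x(u_i)+x(u_j)\le D^*$, where $D^*$ is the optimum. The consequence I would use repeatedly is that inserting into $C$ any user on leg $l(u_i)$ (whose distance to the center is at most $x(u_i)$) or any off-leg user $w$ with $x(w)\le x(u_j)$ keeps the diameter at most $D^*$; this is exactly where the no-single-leg hypothesis is essential, since it guarantees every cluster's diameter is controlled by its two dominant cross-leg users, so radial insertions near the center never bind.

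For the size bound I would argue that any cluster of size at least $2r$ can be partitioned into two groups of size at least $r$; each group is a subset and hence has diameter no larger than the original, so the maximum diameter and feasibility are preserved, and iterating terminates with every cluster of size at most $2r-1$. The only delicacy is that such a split can turn a multi-leg cluster into single-leg pieces, which is harmless for the \emph{conclusion} (it permits single-leg clusters) but would disrupt the exchange argument below; I would therefore secure the special cluster first and apply the size reduction to the remaining clusters, re-deriving the bound for the special cluster separately.

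The heart of the proof is the existence of a special cluster, which I would obtain by a canonical-form argument centered on the farthest user $u_n$, with $l=l(u_n)$, and the cluster $C$ containing it. Among all optimal solutions with no single-leg cluster I would select one that first maximizes the radial prefix $\{u_1,\dots,u_m\}$ contained in $C$ and then, subject to that, maximizes the number of leg-$l$ users in $C$, and I claim the resulting $C$ is special. If its ball part were not the full prefix $\{u_1,\dots,u_j\}$, there would be a user $u_p\notin C$ with $p\le j$; by the geometric fact above $u_p$ may be inserted into $C$ without exceeding $D^*$, and I would restore the $\ge r$ constraint by a compensating swap with $u_p$'s cluster. Converting ``$u_p$ may be grouped with $C$'' into ``$u_p$ may be grouped so that the ball parts are nested prefixes'' is precisely Nakano's consecutiveness lemma (\cite[Lemma~2]{nakano2018simple}) applied to the radial order restricted to the ball region, which yields an optimal grouping whose ball parts are prefixes; hence the innermost one is a genuine prefix, contradicting maximality. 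A symmetric exchange---pulling each remaining leg-$l$ user into $C$, again diameter-safe and feasibility-restored by a swap---forces $C$ to contain all of leg $l$, completing specialness.

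The step I expect to be the main obstacle is the feasibility bookkeeping inside these swaps: inserting a closer-in user into $C$ is always diameter-safe under the no-single-leg hypothesis, but one must simultaneously avoid dropping any cluster below size $r$ while still making monotone progress toward the prefix-plus-full-leg structure, and one must reconcile this with the size bound (a special cluster must end up with at most $2r-1$ users, so the chosen leg $l$ must carry few far-out users and the prefix must be trimmed to the minimum that keeps $C$ feasible). This is exactly where Nakano's lemma does the real work, reducing the re-grouping of the ball region to the one-dimensional case in which consecutive, i.e.\ prefix, clusters are known to be optimal; the remaining task is the routine verification that after trimming the prefix and splitting the non-special clusters, the final solution is optimal, has all clusters of size at most $2r-1$, and still contains the special cluster.
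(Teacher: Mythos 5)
You should first note that the paper does not actually prove this lemma: it explicitly omits the proof, stating that the statement is a reformulation of Lemmas~3 and~8 of Ahmed et al.\ via Lemma~2 of Nakano. So your sketch must be judged on its own. Your diameter formula for a multi-leg cluster and the splitting argument for the $2r-1$ bound are both correct. The genuine gap is your central construction: anchoring the special cluster at the cluster $C$ containing the farthest user $u_n$ and forcing $C$ to absorb all of leg $l(u_n)$. Specialness requires $C$ to contain \emph{all} users on the leg of its last user, plus a nonempty prefix ball part, so if leg $l(u_n)$ carries $2r$ or more users beyond every prefix cut, no cluster containing $u_n$ can be special while respecting the $2r-1$ size bound --- and such instances satisfy the lemma's hypothesis. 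Concretely, take $r=2$, leg $1$ with four users at distance $10$, and legs $2,\dots,5$ with one user each at distance $10$, with the tie-break ordering the four leg-$1$ users last (so $u_n$ is on leg $1$). The optimum $20$ is attained by an all-multi-leg solution, so the hypothesis holds; any special cluster containing $u_n$ must contain all four leg-$1$ users plus a ball user, hence at least $5 > 2r-1 = 3$ users, and your "trim the prefix to the minimum" remedy cannot help since the leg-$1$ users are not trimmable. Yet the lemma's conclusion holds: the clusters $\{u_1,u_2,u_4\}$, $\{u_3,u_5,u_6\}$, $\{u_7,u_8\}$ are optimal with all sizes at most $3$, and $\{u_1,u_2,u_4\}$ is special with its last user on a \emph{singleton} leg. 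So the segment leg of the special cluster must be chosen adaptively (essentially, a leg whose remaining tail is small), and proving that such a leg and the accompanying exchanges exist is precisely the content of the cited Lemmas~3 and~8 that your sketch leaves unproved.

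A secondary but real problem is your appeal to Nakano's consecutiveness lemma "applied to the radial order restricted to the ball region.'' That lemma concerns the one-dimensional problem, whereas here the cost of a cluster depends jointly on its ball part and on which leg hosts its segment part; the re-grouping of ball users interacts with the segment legs, so the reduction to the line case is not automatic and is exactly where Ahmed et al.'s case analysis does its work. Likewise, the "compensating swap'' restoring the size-$r$ constraint is asserted but not constructed: after moving $u_p$ into $C$, the donor cluster must receive a user whose insertion is diameter-safe for \emph{its} pair of dominant users, and nothing in your extremal choice guarantees such a user exists in $C$. Since the paper itself defers to the cited lemmas, your proposal does not diverge from the paper's route so much as black-box the same hard steps while adding one concrete new idea --- the $u_n$-extremal canonical form --- which the example above refutes.
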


By definition, the segment part of a cluster is non-empty and contains the users from a single leg. 
By removing a special cluster and applying the lemma repeatedly, we can state that there is an optimal solution consists of a suffix-special family of multi-leg clusters.

\begin{algorithm}[tb]
\caption{Suffix-Special Multi-Leg Clusters Generation\label{subalg}}    
\begin{algorithmic}[1]
\REQUIRE Set of legs $\mathcal{L}=\{l_1, \dots, l_d\}$, set of users $\mathcal{U}=\{u_1, \dots, u_n\}$, positive integer $r$
\STATE $\mathcal{C}:=\emptyset, C:=\emptyset, S:=\{1, \dots, d\}$
\FOR{$i=1, \dots, n$}
    \IF{$l(u_i) \in S$}
        \STATE Choose one. (a): Use $u_i$ in ball part of cluster $C$. $C:=C\cup\{u_i\}$.\\ (b): Discard $u_i$ and all further users in $l(u_i)$. $S:=S\setminus \{l(u_i)\}$. In this case, discarded users are used in single-leg clusters.
    \ENDIF
    \STATE Choose one. (c): Continue to choose ball part. Do nothing. We can choose this alternative only when $|C|<2r-1$. \\ (d): Finish to choose the ball part and go on to choose the segment part.
    \IF{(d) is chosen}
        \STATE Choose a leg $l\in S$.
        \STATE Choose a non-negative integer $t$, subject to there are at least $t$ users on leg $l$ whose indices are larger than $i$ and $r\leq |C|+t\leq 2r-1$. Add first $t$ users among such users to $C$.
        \STATE $S:=S\setminus \{l\}$, $\mathcal{C}:=\mathcal{C}\cup \{C\}, C:=\emptyset$
    \ENDIF
\ENDFOR
\ENSURE $\mathcal{C}$ (only when $C$ is empty)
\end{algorithmic}
\end{algorithm}

We first present a brute-force algorithm that enumerates all the suffix-special families of multi-leg clusters in Algorithm~\ref{subalg}. 
The correctness is clear from the definition.
Once all such families of clusters are enumerated, the remaining thing is to deal with single-leg clusters.
It can be done by solving the line-case problem on the remaining users, independently for each leg, which can be solved by pre-calculated dynamic programming.


Then, we accelerate Algorithm~\ref{subalg} by dynamic programming. Similar to Algorithm~\ref{subalg}, we make a special cluster of remaining users one by one, by looking through the users and decide whether to use them into the current cluster.
In Algorithm~\ref{subalg}, all information we should remember to construct the current cluster is $i$, $S$, and $C$.
We claim that, for the dynamic programming, we only need to remember (1) the size of $C$ in order not to make a too-small cluster and (2) the index of the last user in the ball part of $C$ in order to calculate the diameter/cost of the cluster. 
Assume that we know the last user $u$ in the ball part of $C$ and last user $v$ in the segment part of $C$.
Then, the diameter/cost of $C$ is spanned by $u$ and $v$; 
Then, we can compute the diameter/cost of the cluster.
Here we denote the diameter/cost of the multi-leg cluster by $\Cost(v,u)$ for both of the problems.

We should also deal with single-leg clusters.
For single-leg clusters, we can apply the results for line case, which are well studied.
For all leg $l$ and for all integers $k$ from $0$ to the number of users on leg $l$, we first compute the optimal objective value, only considering the last $k$ users on leg $l$.
For each user $u_i$, we denote the optimal objective value for the set of users on leg $l(u_i)$ whose indices are greater than $i$ and no less than $i$ by $R^+(u_i)$ and $R^-(u_i)$, respectively.
All these values can be computed in linear time for both $r$-gather clustering problem and $r$-gathering problem, by using the technique in the latest method~\cite{sarker2019r}.

\begin{algorithm}[tb]
\caption{A FPT-time algorithm of $r$-gather clustering problem and $r$-gathering problem on spider}         
\label{alg}             
\begin{algorithmic}[1]
\REQUIRE Set of legs $\mathcal{L}=\{l_1, \dots, l_d\}$, set of users $\mathcal{U}=\{u_1, \dots, u_n\}$, positive integer $r$. In $r$-gathering, we are also given a set of facilities $\mathcal{F}$.
\STATE Calculate $R^+(u_i),R^-(u_i)$ for all $i$.
\STATE $\DP[i][S][j][k]:= \infty$ for all $0\leq i\leq n, S\subseteq \{1,\dots d\}, 0\leq j\leq 2r, 0\leq k\leq 2dr$
\COMMENT{Here $i,S,j,k$ means the user we are looking at now, the set of available legs, the size of current cluster, the last user in current cluster, respectively}
\STATE $\DP[0][S][0][0]:=0$
\FOR{$i=1,\dots,n$}
    \FOR{$S\subseteq \{1,\dots,d\}, j=0,\dots,2r-2, k=0,\dots,i-1$, such that $l(u_i)\in S$}
        \STATE $\DP[i][S][j+1][i]:=\min(\DP[i][S][j+1][i],\DP[i-1][S][j][k])$
        \COMMENT{Here we use $u_i$ in the ball part of current cluster}
        \STATE $\DP[i][S\setminus \{l(u_{i})\}][j][k]:=\min(\DP[i][S\setminus \{l(u_{i})\}][j][k],\max(\DP[i-1][S][j][k],R^-(u_i)))$
        \COMMENT{Here we discard leg $l(u_i)$}
    \ENDFOR
    \FOR{$S\subseteq \{1,\dots,d\}, j=0,\dots,2r-2, k=1,\dots,i-1, l=1,\dots,d$, such that $l\not \in S$}
         \FOR{$p=\max(0,r-j),\dots,2r-1-j$}
             \IF{There are at least $p$ users on leg $l$ whose indices are larger than $i$}
             \STATE Let $v$ be the $p$-th such user
             \STATE $\DP[i][S\setminus \{l\}][0][i]=\min(\DP[i][S\setminus\{l\}][0][i],$ $\max(\DP[i][S][j][k], \Cost(v,u_k), R^+(v)))$
             \COMMENT{Here we use remaining first $p$ users on leg $l$ as the segment part of the current cluster}
             \ENDIF
         \ENDFOR
    \ENDFOR
\ENDFOR
\STATE $res:=\infty$
\FOR{$i=0,\dots,n$}
    \STATE $res=\min(res,\DP[i][\emptyset][0][i])$
\ENDFOR
\ENSURE $res$
\end{algorithmic}
\end{algorithm}

The algorithm is shown in Algorithm~\ref{alg}.
The correctness is clear from the construction.
Thus, we analyze the time complexity.
A naive implementation of the algorithm requires $O(2^dn^2r^2d)$ evaluations of $\Cost$ and a preprocessing for $R^+$ and $R^-$.
Each evaluation of $\Cost$ requires $O(1)$ time for $r$-gather clustering problem and $O(m)$ time for $r$-gathering problem.
The preprocessing requires $O(n)$ time for $r$-gather clustering problem and $O(n+m)$ time for $r$-gathering problem~\cite{sarker2019r}.
Thus the time complexities are $O(2^dn^2r^2d)$ for $r$-gather clustering problem and $O(2^dn^2r^2dm)$ for $r$-gathering problem.

We can further improve the complexity.
The loop for $i$ is reduced to see only first $(2r-1)d$ users from each leg since other users cannot be contained in the ball part of multi-leg clusters.
Thus we can replace $n$ to $rd^2$ in the complexity so as we obtain the complexities $O(2^dr^4d^5+n)$ for $r$-gather clustering problem and $O(2^dr^4d^5m+n)$ for $r$-gathering problem.
In $r$-gather clustering problem, this is a linear-time algorithm when $d,r$ are sufficiently smaller than $n$.

In $r$-gathering problem, we can further improve the complexity by improving the algorithm to calculate $\Cost$ (see Appendix~A).
When $d,r$ are sufficiently smaller than $n,m$, this is also a linear-time algorithm.
Therefore, Theorem~\ref{tractable} is proved.

\section{NP-hardness of min-max $r$-Gather Clustering on Spider}

We prove the min-max $r$-gather clustering problem is NP-hard even on a spider.
We first propose \emph{arrears problem} (Problem~\ref{arrears}) as an intermediate problem.
Then, we reduce the arrears problem to the min-max $r$-gather clustering problem on a spider.
Finally, we prove the strong NP-hardness of the arrears problem.

\begin{prb}[Arrears Problem]
\label{arrears}
We are given $n$ sets $S_1, \dots, S_n$ of pairs of integers, i.e., $S_i = \{ (a_{i,1}, p_{i,1}), \dots, (a_{i,|S_i|}, p_{i,|S_i|}) \}$ for all $i = 1, \dots, n$.
We are also given $m$ pairs of integers $(b_1, q_1), \dots, (b_m, q_m)$.
The task is to decide whether there is $n$ integers $z_1, \dots, z_n$ such that
\begin{align*}
\sum_{a_{i,z_i}\leq b_j}p_{i,z_i}\leq q_j
\end{align*}
holds for for all $j=1, \dots, m$. 
\end{prb}
The name of the ``arrears problem'' comes from the following interpretation.
Imagine a person who is in arrears in his $n$ \emph{payment duties} $S_1, \dots, S_n$.
Each payment duty $S_i$ has multiple options $(a_{i,1}, p_{i,1}), \dots, (a_{i, |S_i|}, p_{i, |S_i|})$ such that he can choose a \emph{payment amount} of $p_{i, k}$ dollar with the \emph{payment date} $a_{i, k}$ for some $k$.
Each pair $(b_j, q_j)$ corresponds to his \emph{budget constraint} such that he can pay at most $q_j$ dollar until $b_j$-th day.

The arrears problem itself may be an interesting problem, but here we use this problem just for a milestone to prove the hardness of min-max $r$-gather clustering problem on a spider.
The proof follows the following two propositions.

\begin{prp}[Reduction from the arrears problem]\label{hard1}
If the arrears problem is strongly NP-hard, the min-max $r$-gather clustering problem on a spider is NP-hard.
\end{prp}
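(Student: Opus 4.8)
The plan is to construct, in polynomial time, a many-one reduction from the arrears problem to the \emph{decision} version of min-max $r$-gather clustering on a spider, namely: given a threshold $D$, does there exist a partition into clusters of size at least $r$ whose maximum diameter is at most $D$? Since the statement only supposes the arrears problem is \emph{strongly} NP-hard, I may assume every integer $a_{i,k},p_{i,k},b_j,q_j$ in a hard instance is bounded by a polynomial in the input size. This is exactly the leverage the reduction needs: it permits placing a number of users, and using a number of legs, that is polynomial in the numerical values themselves (e.g.\ one user per integer radius up to $\max_j b_j$), something a merely weakly-NP-hard source would not license.

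I would encode each payment duty $S_i$ by its own leg $l_i$ (together with a bounded number of auxiliary legs and ``filler'' users whose only role is to pad clusters up to size $r$). On $l_i$ the users are placed so that the sole clustering decision is where, along $l_i$, the boundary between users forced into multi-leg (center-crossing) clusters and users kept in single-leg clusters falls; by the recalled structure of optimal solutions this boundary is well defined, and the legal boundary positions are made to correspond bijectively to the options of $S_i$. I encode the payment date $a_{i,k}$ as the radial coordinate of the boundary under option $k$, and the payment amount $p_{i,k}$ as the number of users on $l_i$ that option $k$ forces to cross the center (arranged by leaving gaps near the center so that those users cannot be completed into a size-$r$ single-leg cluster within diameter $D$, and must instead join a multi-leg cluster). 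The budget pair $(b_j,q_j)$ is then realized through the diameter threshold: a multi-leg cluster reaching radius $x$ on one leg and $x'$ on another has diameter $x+x'$, so the bound $D$ caps, for each radius, the number of center-crossing users that can be absorbed within that radius. Calibrating the number of ``anchor'' users available at radius $b_j$ so that the absorbable capacity within radius $b_j$ is exactly $q_j$, the users that option $z_i$ forces across the center at radius $a_{i,z_i}\le b_j$ are precisely the ones loading that capacity; the $j$-th constraint $\sum_{a_{i,z_i}\le b_j}p_{i,z_i}\le q_j$ therefore becomes equivalent to the statement that all multi-leg clusters within radius $b_j$ can be closed with diameter at most $D$.

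The two directions then read off as follows. For completeness, a valid option assignment $z_1,\dots,z_n$ yields the matching boundaries, and since each budget inequality holds every forced cross-center load fits within its radius, so all clusters close within diameter $D$. For soundness I must argue that \emph{any} clustering of diameter at most $D$ can be normalized into this canonical form: here I invoke the structural lemmas recalled above (first-multi-leg-then-single-leg on every leg, and at most $2r-1$ users per multi-leg cluster in some optimum) to show that each leg's clustering reduces to a single well-defined boundary, hence to a single option $z_i$, and that no cluster can ``borrow'' capacity meant for a different budget radius. The main obstacle is exactly this quantitative calibration, making the $m$ prefix-type budget constraints act \emph{independently} at the distinct radii $b_1,\dots,b_m$: the gadget coordinates and the padding counts must be chosen so that forcing $p_{i,k}$ users across at radius $a_{i,k}$ affects the capacity at every radius $b_j\ge a_{i,k}$ and at none below it, without interference between legs, and so that the size-$r$ lower bound is never the binding constraint except where intended. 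Verifying this non-interference, and that the single threshold $D$ simultaneously enforces all $m$ inequalities, is the delicate step, and it is precisely what the polynomially bounded coordinates guaranteed by strong NP-hardness make achievable.
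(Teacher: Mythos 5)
Your overall frame --- a pseudo-polynomial construction licensed by strong NP-hardness, one long leg per payment duty whose multi-leg/single-leg boundary encodes the chosen option, short ``anchor'' legs realizing the budgets, and a single diameter threshold $D$ --- is the same as the paper's. But the quantitative encoding, which you yourself flag as the unverified ``delicate step,'' is inverted in a way that cannot be repaired within your setup. You let $p_{i,k}$ be the number of users on leg $l_i$ forced across the center, and you want the budget pair $(b_j,q_j)$ to act as an \emph{absorbable capacity} of $q_j$ within radius $b_j$. In this problem, however, clusters have a size floor $r$ but no size ceiling, so absorption capacity is not a finite resource: a single cluster anchored at a short leg of length at most $D-x$ can absorb arbitrarily many crossing users at radius $x$, and absorbing more crossing users only \emph{decreases} its need for anchors. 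The only counting pressure available is the size floor: a cluster containing the $p_{i,z_i}$ crossing users of leg $i$ must recruit $r-p_{i,z_i}$ short-leg users of length at most $a_{i,z_i}$. The resulting prefix (Hall-type) feasibility conditions read $\sum_{a_{i,z_i}\leq b_j}(r-p_{i,z_i})\leq A_j$, i.e., $r\,k_j-\sum_{a_{i,z_i}\leq b_j}p_{i,z_i}\leq A_j$, where $k_j$ is the number of duties whose chosen date lands in the prefix --- a solution-dependent cardinality. No fixed choice of anchor counts $A_j$ makes this equivalent to the arrears constraint $\sum_{a_{i,z_i}\leq b_j}p_{i,z_i}\leq q_j$, so the calibration you defer is not merely delicate; as set up, it cannot be done.

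The paper calibrates in exactly the dual way, and that is the missing idea: make the \emph{demand} for short-leg users equal $p_{i,z_i}$, rather than making the crossing count equal $p_{i,z_i}$. With threshold $D=2L$ and $r=\max(\max_i p_{i,|S_i|},q_m)+1$, it places $r$ users at $(i,4L-a_{i,|S_i|}+1)$, then $p_{i,k+1}-p_{i,k}$ users at $(i,2L-a_{i,k})$ for $k<|S_i|$, and $r-p_{i,|S_i|}$ users at $(i,2L-a_{i,|S_i|})$; this telescoping guarantees that whatever option $z_i$ the end-cluster boundary realizes, exactly $r-p_{i,z_i}$ users remain on leg $i$, and by the diameter bound they must recruit exactly $p_{i,z_i}$ users from short legs of length at most $a_{i,z_i}$. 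Supplying $q_j-q_{j-1}$ short legs of length $b_{j-1}+1$ makes the number of short-leg users within radius $b_j$ exactly $q_j$, so the feasibility conditions become verbatim the arrears inequalities, with no cardinality term, and $r$ extra short legs of length $L$ absorb all leftover users into one final cluster. Note also that the heavy structural lemmas you invoke for soundness (suffix-special clusters, the $2r-1$ bound) are unnecessary under this placement: elementary distance facts --- no cluster meets two long legs, the end cluster is confined to its leg, and the innermost group on a leg cannot reach the leg's tip --- already pin down the boundary structure, and placing users only at the option coordinates guarantees the boundary corresponds to a legal option $z_i$.
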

\begin{prp}[Hardness of the arrears problem]\label{hard2}
The arrears problem is strongly NP-hard.
\end{prp}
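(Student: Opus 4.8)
The plan is to prove strong NP-hardness by a polynomial reduction from $3$-SAT, which is strongly NP-hard because its instances contain no large numbers; I will build an arrears instance whose dates, amounts $p_{i,k}$, and budgets $q_j$ are all bounded by a polynomial in the formula size, so that the hardness transfers in the strong sense. Given a formula with variables $x_1,\dots,x_k$ and clauses $C_1,\dots,C_m$, I create one \emph{occurrence duty} for every occurrence of a literal in a clause: the duty has two options, one declaring that this occurrence \emph{satisfies} its clause (the literal is true) and one declaring that it \emph{fails} its clause (the literal is false). An assignment $z$ of options thus encodes, for each occurrence, a truth value of the underlying literal. Two families of budget constraints must then be engineered: \emph{clause gadgets} checking that every clause is satisfied, and \emph{consistency gadgets} checking that all occurrences of the same variable receive the same truth value.

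For the clause gadget I use the identity that a clause of three literals is satisfied exactly when at most two of its literals are false. I place the three occurrence duties of $C_j$ so that their ``fail'' options each deposit a unit amount at a date lying in a private window reserved for $C_j$, and I install a budget at the end of that window with capacity two; the budget then asserts $(\text{number of false literals of }C_j)\le 2$, which is precisely clause satisfaction. For the consistency gadget I force the occurrences $o_{i,1},\dots,o_{i,s}$ of a variable $x_i$ to agree by imposing the cyclic chain of inequalities $\mathrm{truth}(o_{i,1})\le\mathrm{truth}(o_{i,2})\le\dots\le\mathrm{truth}(o_{i,s})\le\mathrm{truth}(o_{i,1})$, which can only hold when all the values coincide. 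Each single inequality $\mathrm{truth}(o)\le\mathrm{truth}(o')$ forbids the pattern ($o$ true, $o'$ false), and I realize it as a capacity-one budget on a private window into which only the relevant ``true'' option of $o$ and ``false'' option of $o'$ deposit a unit.

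The step I expect to be the main obstacle is \emph{isolation}: in the arrears problem an option deposits its single amount into \emph{every} budget whose deadline $b_j$ is at least the chosen date $a_{i,z_i}$, so each budget automatically sees a whole suffix of the duties rather than only the two or three I want it to test. A single duty therefore cannot be made to influence an arbitrary subset of clauses, which is exactly why I split each variable into independent occurrence duties in the first place, and it is also why the gadgets above can interfere with one another through their shared timeline. I plan to defeat this by laying all gadgets out along the time axis in disjoint windows and arranging that, up to any window's deadline, the total amount contributed by the duties of earlier windows is \emph{independent of the chosen options}; then a budget placed at that deadline measures only the increment inside its own window against a fixed, precomputable baseline, and I can bake that baseline into $q_j$. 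Guaranteeing this choice-independence simultaneously for every window, while the occurrence duties of one variable necessarily reappear across many windows, is the delicate part of the construction and will drive the precise choice of dates and (polynomially small) amounts; once it is in place, correctness of the reduction in both directions is routine, and the polynomial bound on all numbers yields strong NP-hardness.
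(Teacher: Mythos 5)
Your reduction has a genuine structural gap, and it sits exactly at the step you defer. In the arrears problem each duty makes a \emph{single} payment: option $k$ is one pair $(a_{i,k},p_{i,k})$, and that one amount is counted by every budget whose deadline is at least $a_{i,k}$. For your private-window/baseline scheme, a duty's contribution must become choice-independent at all deadlines past both of its option dates, which forces $p_{i,1}=p_{i,2}$; then a budget sees the duty precisely when its deadline lies between the two dates, and what it sees is the indicator ``the earlier-dated option was chosen,'' with a single polarity fixed once and for all by which option received the earlier date. So each occurrence duty can emit exactly one one-bit signal, of one polarity, over one contiguous interval of deadlines. But your cyclic consistency chain needs the same occurrence $o$ to deposit a \emph{true}-unit in the window testing $\mathrm{truth}(o)\le\mathrm{truth}(o')$ and a \emph{false}-unit in the window testing $\mathrm{truth}(o'')\le\mathrm{truth}(o)$; that requires $o$'s true date to precede its false date and also the reverse, a contradiction. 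The clause gadget compounds this: it needs false-signals from all three occurrences of a clause, while the chain forces some of those same duties to have true-earlier polarity. No layout of dates realizes your gadgets simultaneously; the obstacle you flagged is not a delicate detail to be engineered later but the point where this design breaks.

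The paper's proof circumvents exactly this difficulty by abandoning isolation rather than perfecting it. It reduces from 1-IN-3SAT, sets $p_{y,2}=2p_{y,1}$ (so a choice stays visible to all later budgets instead of cancelling), and installs two budgets $(n,R)$ and $(n+m+2,3R)$ with $R$ equal to half of $\sum_y p_{y,1}$: writing $x$ for the amount paid by day $n$, the constraints $x\le R$ and $4R-x\le 3R$ sandwich $x=R$, forcing that budget to hold with equality. A telescoping argument then upgrades \emph{all} intermediate budget inequalities to equalities, and base-$B$ digit packing of the payment amounts lets a single budget encode several exact counting constraints at once: per-variable equalities achievable only when a variable's items are uniformly early or uniformly late (this replaces your consistency chain), and exact counts $|X\cap Z_j|=1$ encoding the one-in-three condition (this replaces your capacity-two clause budget, which has no analogue there because equality forcing yields exact counts rather than one-sided caps). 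If you want to rescue your plan, you need an equality-forcing mechanism of this kind; within the arrears problem's cumulative-budget structure, isolation alone cannot be made to work.
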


Without loss of generality, we assume that $b_1 < \dots < b_m$ and $q_1 < \dots < q_m$. 
Also, we assume that $a_{i,1} < \dots < a_{i,|S_i|}$ and $p_{i,1} < \dots < p_{i,|S_i|}$ for all $i = 1, \dots, n$.

\subsection{Reduction from Arrears Problem}

We first prove Proposition~\ref{hard1}.
In this subsection, let $n$ be the number of payment duties and $m$ be the number of budget constraints.

Let $\mathcal{I}$ be an instance of the arrears problem.
We define $L = \max(\max_{i}a_{i,|S_i|},b_m)+1$ and 
$r = \max(\max_{i}p_{i,|S_i|},q_m)+1$.
We construct an instance $\mathcal{I}'$ of the decision version of the $r$-gather clustering problem on a spider that requires to decide whether there is a way to divide vertices into clusters all of which has size at least $r$ and diameter at most $2L$.


In construction, we distinguish two types of legs --- long and short.
Each long leg corresponds to a payment duty and each short leg corresponds to a budget constraint.

For each payment duty $S_i$, we define a long leg $i$.
We first put $r$ users on $(i, 4L-a_{i,|S_i|}+1)$. Then, we put $p_{i,k+1}-p_{i,k}$ users on $(i,2L-a_{i,k})$ for all $k=1,\dots,|S_i|-1$. 
Finally, we put $r-p_{i,|S_i|}$ users on $(i,2L-a_{i,|S_i|})$.

A short leg has only one user. The distance from the center to the user is referred to as the \emph{length} of the short leg.
For each $j = 1, \dots, m$, we define $q_j - q_{j-1}$ short legs of length $b_{j-1} + 1$, where we set $q_0 = b_0 = 0$.
We also define $r$ short legs of length $L$.


This construction can be done in pseudo-polynomial time.
Now we prove that $\mathcal{I}'$ has a feasible solution if and only if $\mathcal{I}$ is a yes-instance of the arrears problem.
We start by looking some basic structures of clusters in a feasible solution of $\mathcal{I}'$.

\begin{lemma}\label{twolegs}
In a feasible solution to $\mathcal{I}'$, there is no cluster that contains users from two different long legs.
\end{lemma}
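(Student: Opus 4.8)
The plan is to show that every user placed on a long leg sits strictly farther than $L$ from the center $o$, and then use the spider metric to conclude that two users on distinct long legs are always more than $2L$ apart, which is incompatible with the diameter bound $2L$ imposed on feasible clusters.

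First I would compute the minimum distance from $o$ to any user on a fixed long leg $i$. By construction, the users on leg $i$ lie at distances $4L - a_{i,|S_i|} + 1$, at $2L - a_{i,k}$ for $k = 1, \dots, |S_i|-1$, and at $2L - a_{i,|S_i|}$. Since the sequence $a_{i,1} < \dots < a_{i,|S_i|}$ is increasing, the smallest of all these distances is exactly $2L - a_{i,|S_i|}$. The key inequality is that $a_{i,|S_i|} < L$, which is immediate from the definition $L = \max(\max_{i} a_{i,|S_i|}, b_m) + 1$. Hence every user on leg $i$ lies at distance at least $2L - a_{i,|S_i|} > 2L - L = L$ from the center.

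Next I would invoke the spider metric. Suppose, for contradiction, that some feasible cluster $C$ contains a user $u$ on long leg $i$ and a user $u'$ on a \emph{different} long leg $i'$ with $i \neq i'$. Writing $u = (i, x)$ and $u' = (i', x')$, the metric on $\mathcal{L}$ gives $d(u, u') = x + x'$ because $u$ and $u'$ lie on distinct legs. By the previous paragraph, $x > L$ and $x' > L$, so $d(u, u') > 2L$. This means the diameter of $C$ exceeds $2L$, contradicting the feasibility of the solution, which requires every cluster to have diameter at most $2L$. Therefore no feasible cluster contains users from two different long legs.

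There is no genuine obstacle here: the statement reduces to the single observation that the long-leg users are pushed beyond distance $L$ from the center, so any cross-leg pair already overshoots the diameter budget. The only point requiring care is correctly identifying $2L - a_{i,|S_i|}$ as the minimum over-all-users distance and confirming the strict inequality $a_{i,|S_i|} < L$ from the chosen value of $L$; once these are in place the spider metric does the rest.
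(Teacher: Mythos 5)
Your proof is correct and follows essentially the same route as the paper's: the paper's one-line argument also observes that every user on a long leg lies at distance more than $L$ from the center (the minimum being $2L - a_{i,|S_i|} > L$ since $a_{i,|S_i|} < L$ by the choice of $L$), so two users on distinct long legs are more than $2L$ apart and cannot share a cluster. You have merely made explicit the computation of the minimum distance that the paper leaves implicit, which is a fair level of detail for the same argument.
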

\begin{proof}
By definition, the distance between the center and a user on a long leg is larger than $L$.
So, the distance between users from two different long legs exceeds $2L$; hence, they cannot be in the same cluster.
\end{proof}

An \emph{end cluster} of long leg $i$ is a cluster that contains the farthest user of $i$.
The above lemma implies that, in a feasible solution, end clusters of different long legs are different. 
Intuitively, the ``border'' of end cluster of long leg $i$ corresponds to the choice from the options of payment duty $S_i$.

\begin{lemma}\label{threesmall}
For each long leg $i$, the following three statements hold.
{\rm (a)} An end cluster of $i$ only contains the users from $i$.
{\rm (b)} There is exactly one end cluster of $i$, and no other cluster consists of only users from leg $i$.
{\rm (c)} Some users on $i$ are not contained in tend cluster.
\end{lemma}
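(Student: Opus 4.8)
The plan is to establish the three parts in the order (a), (b), (c), since the counting in (b) depends on the distance fact proved in (a), and (c) reuses the same distance reasoning. Throughout I would lean on two features of the construction. First, every user on a long leg lies at distance strictly greater than $L$ from the center $o$: the coordinates $2L-a_{i,k}$ and $4L-a_{i,|S_i|}+1$ all exceed $L$ because each $a_{i,k}<L$. Second, the farthest user $f$ of leg $i$, placed at $4L-a_{i,|S_i|}+1$, lies at distance greater than $3L$ from $o$. These offsets of $2L$ and $4L$ are exactly what the placement was engineered to give, so I would state them up front.

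For part (a), I would argue directly from the diameter bound $2L$. Let $w$ be any user not on leg $i$. Since $f$ and $w$ lie on different legs, $d(f,w)=d(f,o)+d(o,w)\ge d(f,o)>3L>2L$, so $w$ cannot share a cluster with $f$. Hence any cluster containing $f$, i.e. any end cluster of $i$, consists solely of users of leg $i$. This is really a strengthening of Lemma~\ref{twolegs}: pushing $f$ out past distance $3L$ rules out not only the other long legs but also all short-leg users at once.

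For part (b) I would count. Summing the multiplicities in the construction, leg $i$ carries $r+\sum_{k=1}^{|S_i|-1}(p_{i,k+1}-p_{i,k})+(r-p_{i,|S_i|})=2r-p_{i,1}$ users, which is at most $2r-1<2r$ because $p_{i,1}\ge 1$. By part (a), an end cluster, and more generally any cluster made only of leg-$i$ users, contains at least $r$ users all drawn from leg $i$; two disjoint such clusters would need at least $2r$ users on leg $i$, contradicting the count. Since $f$ lies in some cluster, there is exactly one end cluster and no other leg-$i$-only cluster. For part (c) I would again use the diameter bound: the nearest users of $i$ sit at $2L-a_{i,|S_i|}$, hence at distance $(4L-a_{i,|S_i|}+1)-(2L-a_{i,|S_i|})=2L+1>2L$ from $f$, so none of them can belong to the unique end cluster; and since $r=\max(\max_i p_{i,|S_i|},q_m)+1>p_{i,|S_i|}$, there are $r-p_{i,|S_i|}\ge 1$ such users, giving at least one user of $i$ outside the end cluster.

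The main obstacle I anticipate is part (b): the entire uniqueness claim rests on the exact total $2r-p_{i,1}<2r$, so the delicate point is to verify that every user of leg $i$ is counted exactly once (the telescoping sum over the middle points, plus the $r$ farthest and the $r-p_{i,|S_i|}$ nearest users) and that $p_{i,1}\ge 1$ is indeed guaranteed, i.e. the payment amounts in the arrears instance are positive. Once the count is pinned down, parts (a) and (c) are routine distance computations exploiting the deliberate $4L$ and $2L$ offsets.
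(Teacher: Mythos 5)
Your proposal is correct and takes essentially the same route as the paper's proof: (a) from the fact that the farthest user of a long leg lies more than $2L$ from the center, (b) from the count that leg $i$ carries fewer than $2r$ users (your telescoping sum $2r-p_{i,1}$ makes this explicit), and (c) from the gap of $2L+1$ between the farthest users and those at $(i,2L-a_{i,|S_i|})$. You merely spell out details the paper leaves implicit, such as the positivity $p_{i,1}\ge 1$ and the existence of $r-p_{i,|S_i|}\ge 1$ users outside the end cluster.
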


\begin{proof}
{\rm (a)} The endpoint of $i$ is distant by more than $2L$ from center.
{\rm (b)} There are less than $2r$ users on $i$ so they cannot form more then one clusters alone.
{\rm (c)} Users on the point $(i,2L-a_{i,|S_i|})$ are distant from the endpoint of $i$ by more than $2L$, thus they cannot be in the same cluster.
\end{proof}

Lemma~\ref{twolegs} and the third statement of Lemma~\ref{threesmall} implies that users who are not contained in end clusters should form a cluster together with users from short legs.
Now, we prove Proposition~\ref{hard1}.


\begin{proof}[Proof of Proposition~\ref{hard1}]

Suppose that we have a feasible solution to the instance of the min-max $r$-gathering problem on a spider that is constructed as the above.
For each long leg $i$, let $u_i$ be the last user that is not contained in end clusters, and $C_i$ be the cluster that contains $u_i$.
Then, the location of $u_i$ is represented as $(i,2L-a_{i,z_i})$ by using an integer $z_i$.
We choose payment date $a_{i,z_i}$ for payment duty $i$.
We prove that these choices of payment dates are a feasible solution to the arrears problem.

As described above, $C_i$ consists of users from leg $i$ and short legs. Since on leg $i$ there are only $(r-p_{i,|S_i|})+(p_{i,|S_i|}-p_{i,|S_i|-1})+ \dots +(p_{i,z_i+1}-p_{i,z_i})=r-p_{i,z_i}$ users on the path from the center to the location of $u_i$, $C_i$ should contain at least $p_{i,z_i}$ users on short legs with length at most $a_{i,z_i}$.
For $j$-th budget constraint, by the rule of construction, there are $(q_1-q_0)+ \dots +(q_j-q_{j-1})=q_j$ users on short legs whose length is at most $b_j$.
Suppose $a_{i,z_i}\leq b_j$. We use at least $p_{i,z_i}$ users on short legs whose lengths are at most $a_{i,z_i}\leq b_j$ in the cluster $C_i$. Thus, the sum of $p_{i,z_i}$ among all $i$ with $a_{i,z_i}\leq b_j$ is at most the number of users on short legs whose length is at most $b_j$, that is, $q_j$.
That means the budget constraint holds.

Conversely, suppose that we are given a feasible solution to the instance $\mathcal{I}$ of the arrears problem.
First, for each payment duty $i$ we make a cluster with all users located between $(i,4L-a_{i,|S_i|}+1)$ and $(i, 2L-a_{i,z_i}+1)$, inclusive.
This cluster contains at least $r$ users since there are $r$ users on point $(i,4L-a_{i,|S_i|}+1)$ and has diameter at most $2L$.
We renumber the payment duties (thus so do long legs) in the non-decreasing order of $a_{i,z_i}$ and proceed them through the order of indices: 
For a payment duty $i = 1, 2, \dots, n$, we make a cluster $C_i$ by all remaining users on leg $i$ and all users from remaining $p_{i,z_i}$ shortest short legs.
By the construction, these clusters have exactly $r$ users.
We show that the diameter of $C_i$ is at most $2L$.
The diameter is spanned by a long leg and the longest short leg. 
The distance to the long leg in $C_i$ is $2 L - a_{i, z_i}$.
The longest short leg in $C_i$ is the $p_{1,z_1}+ \dots +p_{i,z_i}$-th shortest short leg.
We take the smallest $j$ such that $a_{i,z_i}\leq b_j$.
Then, since given solution is a feasible solution to $\mathcal{I}$, $p_{1,z_1}+ \dots +p_{i,z_i}\leq q_j$ holds.
Since there are $q_j$ users on short legs with length less than $b_{j-1}+1\leq a_{i,z_i}$, the length of longest short leg in $C_i$ is at most $a_{i,z_i}$.
This gives the diameter of $C_i$ is at most $2 L$.
Finally, we make a cluster with all remaining users.
Since there are $r$ short legs of length $L$ and all these users are located within the distance $L$ from the center, we can just put them into a cluster.
Then we obtain a feasible solution to $\mathcal{I}'$, which completes the proof.
\end{proof}

\subsection{Strong NP-Hardness of Arrears Problem}

Now we give a proof outline of Proposition~\ref{hard2}; the full proof is given in Appendix~B.
We reduce the 1-IN-3SAT problem, which is known to be NP-complete~\cite{schaefer1978complexity}.

\begin{prb}[\rm 1-IN-3 SAT problem~\cite{schaefer1978complexity}]
We are given a set of clauses, all of them contains exactly three literals. Decide whether there exists a truth assignment such that all clause has exactly one true literal.
\end{prb}


\begin{proof}[Proof Outline of Proposition~\ref{hard2}]
Let $n$ and $m$ be the number of boolean variables and clauses, respectively.
For each variable $x_i$, we prepare $N := 3 m (m+2) + 1$ items $T_i$ for the positive literal $x_i$ and $N$ items $\bar{T}_i$ for negative literal $\bar{x}_i$.
Let $T = \bigcup_i (T_i \cup \bar{T}_i)$ be the set of all items.
Each item $y \in T$ corresponds to a payment duty $\{ (a_{y,1}, p_{y,1}), (a_{y,2}, p_{y,2}) \}$ having two options.
Then, a solution to the arrears problem is specified by a set $X \subseteq T$ of items $y$ such that $a_{y, 2}$ is chosen.
We denote by $\bar{X} = T \setminus X$ the complement of $X$.
We want to construct a solution to the 1-IN-3SAT problem from a solution $X$ to the arrears problem by 
$x_i = \texttt{true}$ if $y \in X$ for some $y \in T_i$; otherwise $x_i = \texttt{false}$.
To make this construction valid, we define payment dates and payment amounts suitably as follows.

The payment days consist of two periods: the first period is $\{1, \dots, n\}$ and the second period is $\{n+1, \dots, n+m+2\}$.
For each item $y$, $a_{y,1}$ belongs to the first period and $a_{y,2}$ belongs to the second period. 
Let $a_{y,1}=i$ and $a_{y,2}=n+1+j$.
Then, the payment amount $p_{y,1}$ is given in the form $B^4+\alpha_yB^3+iB^2+i\alpha_yB+j$
where $B$ is a sufficiently large integer, and $\alpha_y$ is a non-negative integer, where $\sum_{y\in T_i}\alpha_y=\sum_{y\in \bar{T}_i}\alpha_y=N$ holds for all $i$.
We define $p_{y,2} = 2p_{y,1}$ for all $y \in T$.

Let $R = (1/2) \sum_{y \in T} p_{y,1} = n N B^4 + n N B^3 + n(n+1)/2 N B^2 + n(n+1)/2 N B + \dots$. 
We make two budget constraints $(n, R)$ and $(n+m+2, 3R)$.
Then, these constraints hold in equality: Let $x \le R$ be the total payment until $n$. Then the total payment until $n + m + 2$ is $x + 2 (2 R - x) = 4 R - x \le 3 R$. These inequalities imply $x = R$. (see Lemma~\ref{halfeq} on Appendix B).

We use the first period to ensure that the truth assignment produced by $X$ is well-defined, i.e., if $y \in X$ for some $y \in T_i$ then $y' \in X$ for all $y' \in T_i$.
First, for each $i=1, \dots, n$, we add a budget constraint $(i,iNB^4+iNB^3+(B^3-1))$. By comparing the coefficients of $B^4$ and $B^3$, we have
\[
\sum_{y\in \bar{X}\cap \bigcup_{j=1}^i(T_j\cup \bar{T}_j)} (B^4 + a_y B^3) \leq i N B^4 + i N B^3.
\]
We can prove that for all $i$ these inequalities hold in equality, i.e., 
\begin{align}
\label{welldefeq}
\sum_{y \in y \in \bar{X}\cap (T_i\cup \bar{T}_i)} (B^4 + a_y B^3) = N B^4 + N B
\end{align}
holds for all $i$ as follows.
By using the relation between the coefficients of $p_{y,1}$, we have $\sum_{y \in \bar{X}} (i B^2 + i a_y B) \ge \frac{n(n+1)}{2} N B^2 + \frac{n(n+1)}{2} N B$ (see Proposition~\ref{weldef} on Appendix B).
Since the budget constraint $(n,R)$ is fulfilled in equality, and the coefficients of $B^2$ and $B$ in $R$ are both $\frac{n(n+1)}{2} N$, this inequality holds in equality, which implies \eqref{welldefeq}.
Then, we define values $\alpha_y$ appropriately so that the only $X \cap (T_i\cap \bar{T}_i) = T_i$ or $X \cap (T_i\cap \bar{T}_i)=\bar{T}_i$ satisfy equation \eqref{welldefeq} (see Proposition~\ref{weldef} on Appendix B).
This ensures the well-definedness of the truth assignment.

The second period represents the clauses. Let $Z_i$ be the set of items with $a_{y,2}=i$.
We put budget constraint $(i,(nN + 2\sum_{j=n+1}^i K_j)B^4+(B^4-1))$ for each $i=n+1, \dots, n+m+2$, where $K_{n+1}, \dots, K_{n+m+2}$ are non-negative integers determined later.
Then, by a similar argument to the first period, we can prove that 
\[
|\bar{X}|+2|X\cap (Z_{n+1}\cup\dots\cup Z_{i})|=nN+2\sum_{j=n+1}^iK_j
\]
for each $i=n+1,\dots,n+m+2$ (see Proposition~\ref{clause} in Appendix B).
This implies that $|X\cap Z_i|=K_i$ for each $i = n+1, \dots, n+m+2$.
%
The budget constraint on day $i \ge n + 3$ corresponds to the $i-(n+2)$-th clause.
For $i=n+3, \dots, n+m+2$, we set $K_i=1$.
Then, we have $|X\cap Z_i|=1$, i.e., exactly one literal in $i-(n+2)$-th clause is $\mathtt{true}$. 
The budget constraints on day $n+1$ and $n+2$ are used for the adjustment.
Since $\{ Z_{n+1}, \dots, Z_{n+m+2} \}$ forms a partition of items, we have $|X \cap Z_{n+1}|+|X \cap Z_{n+2}|=|X|-(|X\cap Z_{n+3}| + \dots + |X\cap Z_{n+m+2}|) = N-m$.
Moreover, since the constant term $e_{y,0}$ of $p_{y,1}$ is $e_{y,0} = i-(n+1)$ for all $y\in Z_i$ and $i=n+1,\dots,n+m+2$, we have $\sum_{y\in X} e_{y,0} = \sum_{i=n+1}^{n+m+2}(i-(n+1))|X\cap Z_i|$.
By solving these equations, we obtain $K_{n+1} = |X \cap Z_{n+1}|$ and $K_{n+1} = |X \cap Z_{n+2}|$. 
Since all value appears in $\mathcal{I}'$ is at most $2B^4$ and we can take $B$ in a polynomial of $n,m$.
Thus, The hardness proof is completed.
\end{proof}

\bibliographystyle{plainurl}
\bibliography{bib.bib}

\newpage

\appendix

\section{Calculation of $\Cost$ in $r$-gathering}

In this section, we show how to calculate the $\Cost(v,u)$ in the $r$-gathering problem efficiently. The number of candidates of pair $v,u$ is at most $O(r^2d^4)$, so we calculate the $\Cost$ for all candidates in advance and store them. Now, we describe how to calculate these values. We assume that the facilities are given in increasing order of the distances from the center.

There are two cases of the location of the facility which will be assigned to the cluster -- whether the facility is located on the leg $l(v)$ or not. If it is not located on the leg $l(v)$, we should simply choose the facility which is nearest to the center. This case can be processed in constant time for each pair of $v,u$.

For the facility located on $l(v)$, we should choose the facility which is nearest to the midpoint of the coordinates of $v$ and $u$. We enumerate the midpoints for all pairs and sort them by distance from the center, for each leg. We can apply the two-pointer technique to find the optimal facility by seeing the midpoints in sorted order. This case can be processed in $O(r^2d^4\log(rd)+m)$ time, where $\log$ came from the sorting operation.

We can refer each pre-calculated $\Cost$ value in $O(1)$ time, so the total time complexity of Algorithm \ref{alg} is reduced to $O(2^dr^4d^5+r^2d^4\log(rd)+n+m)=O(2^dr^4d^5+n+m)$.

\section{Proof of NP-Hardness of Arrears Problem}

In this section, we construct an instance of arrears problem $\mathcal{I}'$ from given instance of 1-IN-3SAT $\mathcal{I}$.
In our construction, every payment duty has exactly two payment dates. Let us fix the variable $x_i$. For all $j=1,\dots,m$ and $k=1,2,3$, we prepare two items $u_{i,j,k}$ and $\bar{u}_{i,j,k}$. We also prepare auxiliary items $w_{i,l}$ and $\bar{w}_{i,l}$ for each $l=1,\dots,3m(m+1)+1$. In this way, We prepare $6m(m+2)+2$ items in total for each $x_i$.

We name some important sets of items in following way:

\begin{itemize}
    \item $U_i=\{u_{i,j,k}|1\leq j\leq m,1\leq k\leq 3\}$
    \item $\bar{U}_i=\{\bar{u}_{i,j,k}|1\leq j\leq m,1\leq k\leq 3\}$
    \item $W_i=\{w_{i,l}|1\leq l\leq 3m(m+1)+1\}$
    \item $\bar{W}_i=\{\bar{w}_{i,l}|1\leq l\leq 3m(m+1)+1\}$
    \item $T_i=U_i\cup W_i$
    \item $\bar{T}_i=\bar{U}_i\cup \bar{W}_i$.
    \item $Y_i=T_i\cup \bar{T}_i$
\end{itemize}

We prepare a payment duty $S_y=\{(a_{y,1},p_{y,1}),(a_{y,2},p_{y,2})\}$ for all item $y\in Y_i$.

Before defining the value of these values, we take an integer $B$. All integers which appear as $p_{y,1}$ is represented in the form $e_{y,4}B^4+e_{y,3}B^3+e_{y,2}B^2+e_{y,1}B+e_{y,0}$ by non-negative integers $e_{y,4},\dots,e_{y,0}$. Similarly, all integers which appear as $q_j$ is represented in the form $f_{j,4}B^4+f_{j,3}B^3+f_{j,2}B^2+f_{j,1}B+f_{j,0}$ by non-negative integers $f_{j,4},\dots,f_{j,0}$. We represent these values as if like a vector $(e_{y,4},e_{y,3},e_{y,2},e_{y,1},e_{y,0})$ and $(f_{j,4},f_{j,3},f_{j,2},f_{j,1},f_{j,0})$.
In our construction $p_{y,2}$ is always equal to $2p_{y,1}$, so $p_{y,2}$ can be represented in the form $(2e_{y,4},2e_{y,3},2e_{y,2},2e_{y,1},2e_{y,0})$. 

We take $B$ to be sufficiently large value (but still in a polynomial of $n,m$) so that for all $k=4,3,2,1,0$, sum of $3e_{y,k}$ over all item $y$ is still less than $B$. 
That means that we can compare the sum of payment amount and budget constraint just by comparing sum of $(e_{y,4},e_{y,3},e_{y,2},e_{y,1},e_{y,0})$ (or $(2e_{y,4},2e_{y,3},2e_{y,2},2e_{y,1},2e_{y,0})$) and $(f_{j,4},f_{j,3},f_{j,2},f_{j,1},f_{j,0})$ by lexicographical order of five-dimensional vector.
The concrete value of $B$ is $100n^2m^2$.

Let us start to set the value of payment duties. Let us fix a variable $x_i$. For $u\in U_i\cup \bar{U}_i$, we set $a_{u_1}=i$ and 
\[
a_{u,2}=\left\{\begin{array}{ll}
n+2+j & (u=u_{i,j,k}\:\text{and}\:c_{j,k}=x_i)\\
n+2+j & (u=\bar{u}_{i,j,k}\:\text{and}\:c_{j,k}=\bar{x}_i)\\
n+1 & (\text{otherwise}).
\end{array}
\right.
\]
$p_{u,2}$ is always equal to $2p_{u,1}$. Remaining task is define the value $p_{u,1}$. We set $p_{u,1}$ by following formula. 
\[
\left\{\begin{array}{ll}
(B^2+i)(B+1)B+(j+1) & (u=u_{i,j,k}\:\text{and}\:c_{j,k}=x_i)\\
(B^2+i)(B+1)B & (u=u_{i,j,k}\:\text{and}\:c_{j,k}\neq x_i)\\
(B^2+i)B^2+(j+1) & (u=\bar{u}_{i,j,k}\:\text{and}\:c_{j,k}=\bar{x}_i)\\
(B^2+i)B^2 & (u=\bar{u}_{i,j,k}\:\text{and}\:c_{j,k}\neq \bar{x}_i)
\end{array}
\right.
\]
For each $w\in W_i\cup \bar{W}_i$,  let $K_i$ be $\sum_{u\in U_i}(p_{i,1} \mod B)$ and $\bar{K}_i$ be $\sum_{u\in \bar{U}_i}(p_{i,1} \mod B)$. We set $a_{w,1}=i$ and 
\[
a_{w,2}=\left\{\begin{array}{ll}
n+2 & (w=w_{i,l}\:\text{and}\:1\leq l\leq 3m(m+1)-K_i)\\
n+1 & (w=w_{i,l}\:\text{and}\:3m(m+1)-K_i+1\leq l\leq 3m(m+1)+1)\\
n+2 & (w=\bar{w}_{i,l}\:\text{and}\:1\leq l\leq 3m(m+1)-\bar{K}_i)\\
n+1 & (w=\bar{w}_{i,l}\:\text{and}\:3m(m+1)-\bar{K}_i+1\leq l\leq 3m(m+1)+1).
\end{array}
\right.
\]
$p_{w,2}$ is also always equal to $2p_{w,1}$. Remaining task is define the value $p_{w,1}$. We set $p_{w,1}$ by following formula. 
\[
\left\{\begin{array}{ll}
(B^2+i)(B+1)B+1 & (w=w_{i,l}\:\text{and}\:1\leq l\leq 3m(m+1)-K_i)\\
(B^2+i)(B+1)B & (w=w_{i,l}\:\text{and}\:3m(m+1)-K_i+1\leq l\leq 3m(m+1))\\
(B^2+i)(B+1)B & (w=w_{i,l}\:\text{and}\:l=3m(m+1)+1)\\
(B^2+i)B^2+1 & (w=\bar{w}_{i,l}\:\text{and}\:1\leq l\leq 3m(m+1)-\bar{K}_i)\\
(B^2+i)B^2 & (w=\bar{w}_{i,l}\:\text{and}\:3m(m+1)-\bar{K}_i+1\leq l\leq 3m(m+1))\\
(B^2+i)(B+3m(m+2)+1)B & (w=\bar{w}_{i,l}\:\text{and}s\:l=3m(m+1)+1).
\end{array}
\right.
\]
Note that, since $K_i,\bar{K}_i\leq 3m(m+1)$ by definition, 
\begin{eqnarray}
\sum_{t\in T_i}p_{t,1}=\sum_{t\in \bar{T}_i}p_{t,1}=(3m(m+2)+1)(B^2+i)(B+1)B+3m(m+1)\nonumber
\end{eqnarray}
holds for all $i$. We call this value $R_i$ and set
\[
R=\sum_{i=1}^nR_i=(3m(m+2)+1)(B+1)(nB^2+\frac{n(n+1)}{2})B+3m(m+1)n.
\]
We now remark that $B$ is sufficiently large. We can calculate each $e_{y,k}$ values just by expanding the definition formula. It can be calculated that, for all $k=4,3,2,1,0$ the sum of $e_{y,k}$ over all $p_{y,1}$ is at most $2(3m(m+2)+1)(\frac{n(n+1)}{2})\leq 20m^2n^2$. It is sufficiently small to avoid carry.
Then, we set budget constraints. We set a budget constraints $(i,q_i)$ for each $i=1,\dots,n+m+2$. Value of $q_i$ is
\[
\left\{\begin{array}{ll}
(3m(m+2)+1)(B+1)iB^3+(B^3-1) & (1\leq i\leq n-1)\\
R & (i=n)\\
((3m(m+2)+1)n+6mn+2n+m(m+1))B^4+(B^4-1) & (i=n+1)\\
(3(3m(m+2)+1)n-2(n+m+2-i))B^4+(B^4-1) & (n+2\leq i\leq n+m+1)\\
3R & (i=n+m+2).
\end{array}\right.
\]
We represent $q_i=\{f_{i,4},f_{i,3},f_{i,2},f_{i,1},f_{i,0}\}$.

Now we complete our construction.
All appearing values are at most $3R=O(nm^2B^4)=O(n^9m^{10})$, which is bounded in a polynomial of $n,m$.
We prove that $\mathcal{I}$ is a yes-instance of 1-IN-3SAT if and only if $\mathcal{I}'$ is a yes-instance of arrears problem.

For feasible solution of $\mathcal{I}'$, Let $X$ be the set of items $t$, such that payment date $p_{t,2}$ is chosen for payment duty $S_t$. We define $\bar{X}$ by complement of $X$. Intuitively, for $y=u_{i,j,k}$ or $y=w_{i,l}$, $y\in X$ means $x_i$ is true and $y \in \bar{X}$ means $x_i$ is false. Following proposition guarantees that this type of truth assignment is well-defined.

\begin{prp}\label{weldef}
In a feasible solution of $\mathcal{I}'$, for all $1\leq i\leq n$, one of the following condition holds.
\begin{itemize}
    \item $X\cap Y_i=T_i$.
    \item $X\cap Y_i=\bar{T}_i$.
\end{itemize}
\end{prp}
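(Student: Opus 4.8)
The plan is to reduce the proposition to the single coefficient identity \eqref{welldefeq} and then read off the two allowed configurations from the way the $B^3$-coefficients were chosen. Write the base-$B$ coefficient vector of each $p_{y,1}$ as $(e_{y,4},e_{y,3},e_{y,2},e_{y,1},e_{y,0})$. The construction is engineered so that, for every $y\in Y_i$, one has $e_{y,4}=1$, $e_{y,2}=i$, and $e_{y,1}=i\,e_{y,3}$; moreover $|T_i|=|\bar T_i|=N$ and $\sum_{y\in T_i}e_{y,3}=\sum_{y\in\bar T_i}e_{y,3}=N$. Since $B=100n^2m^2$ dominates every coefficient sum that can occur, all budget comparisons reduce to lexicographic comparison of these five-dimensional vectors with no carrying between digits.

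First I would record the combinatorial core. Inside $Y_i$ there are exactly $N$ items of $e_{\cdot,3}$-weight $1$ (namely all of $T_i$), exactly $N-1$ items of weight $0$, and a single item of weight $N$ (the item $\bar w_{i,3m(m+1)+1}$). A short case check on the counts $(s,t,\epsilon)$ of selected items of each weight shows that the only subsets $Z\subseteq Y_i$ with $|Z|=N$ and $\sum_{y\in Z}e_{y,3}=N$ are $Z=T_i$ and $Z=\bar T_i$. Hence it suffices to prove that $\bar X\cap Y_i$ (the items of $Y_i$ paying on their first, cheaper date) has exactly $N$ elements and total $e_{\cdot,3}$-weight exactly $N$; complementing inside $Y_i$ then yields $X\cap Y_i\in\{T_i,\bar T_i\}$.

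Next I would extract inequalities and one equality from the budget constraints. An item contributes to the budget on day $i\le n$ only if it lies in $\bar X$ and its first date $i_y=e_{y,2}$ satisfies $i_y\le i$; thus the day-$i$ constraint reads $\sum_{y\in\bar X,\,i_y\le i}p_{y,1}\le q_i$ with $q_i=(iN,iN,B-1,B-1,B-1)$. Reading the top two digits (no carries) gives, for $c_i:=|\bar X\cap\bigcup_{j\le i}Y_j|$, the bound $c_i\le iN$, and, once $c_i=iN$ is known, the bound $d_i:=\sum_{y\in\bar X\cap\bigcup_{j\le i}Y_j}e_{y,3}\le iN$. The halving argument of the main text (the constraints $(n,R)$ and $(n+m+2,3R)$ squeeze the total first-date payment to exactly $R$) makes the day-$n$ constraint tight: $\sum_{y\in\bar X}p_{y,1}=R$. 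Reading the $B^4$-, $B^3$-, $B^2$-, and $B$-digits then gives $c_n=nN$, $d_n=nN$, $\sum_{y\in\bar X}i_y=\tfrac{n(n+1)}{2}N$, and $\sum_{y\in\bar X}i_y e_{y,3}=\sum_{y\in\bar X}e_{y,1}=\tfrac{n(n+1)}{2}N$.

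The decisive step, and the one I expect to be the main obstacle, is upgrading the chain of inequalities $c_i\le iN$ (resp.\ $d_i\le iN$) to equalities; the single tight constraint at day $n$ does not suffice by itself. Here I would use the weighted totals as a second moment. Writing $g_j=|\bar X\cap Y_j|$ so that $c_i=\sum_{j\le i}g_j$, Abel summation gives $\sum_{j=1}^n j\,g_j=n\,c_n-\sum_{j=1}^{n-1}c_j$, whence $\sum_{j<n}c_j=n\cdot nN-\tfrac{n(n+1)}{2}N=\tfrac{n(n-1)}{2}N$; but $c_j\le jN$ forces $\sum_{j<n}c_j\le\sum_{j<n}jN=\tfrac{n(n-1)}{2}N$, so equality holds throughout and $c_j=jN$, i.e.\ $g_j=N$, for every $j$. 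Repeating the identical Abel argument with $h_j=\sum_{y\in\bar X\cap Y_j}e_{y,3}$, using $d_j\le jN$ together with $\sum_j j\,h_j=\tfrac{n(n+1)}{2}N$, yields $h_j=N$ for every $j$. Thus $|\bar X\cap Y_i|=N$ and $\sum_{y\in\bar X\cap Y_i}e_{y,3}=N$, which is exactly \eqref{welldefeq}; combined with the combinatorial core this forces $X\cap Y_i\in\{T_i,\bar T_i\}$, completing the proof.
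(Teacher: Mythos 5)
Your proposal is correct and takes essentially the same route as the paper's own proof: tightness of the day-$n$ budget via the $(n,R)$ and $(n+m+2,3R)$ constraints (Lemma~\ref{halfeq}), a summation-by-parts sandwich that upgrades the prefix bounds $\sum_{y\in \bar{X}\cap (Y_1\cup\dots\cup Y_i)}(e_{y,4},e_{y,3})\leq iN(1,1)$ to equalities using $i(e_{y,4},e_{y,3})=(e_{y,2},e_{y,1})$, and the final count-and-weight case check showing only $T_i$ and $\bar{T}_i$ realize $(N,N)$. Your only deviation is one of precision rather than substance: you run the Abel argument in two passes (first the $B^4$-counts, then the $B^3$-weights) to respect the lexicographic nature of the digit comparison, whereas the paper treats both digits simultaneously as a componentwise vector inequality, which your version renders slightly more carefully.
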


Before proving this proposition, we prove the following basic property.

\begin{lemma}\label{halfeq}
In a feasible solution of $\mathcal{I}'$, 
\[
\sum_{y\in \bar{X}}p_{y,1}=\sum_{y\in X}p_{y,1}=R.
\]
\end{lemma}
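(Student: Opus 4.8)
The plan is to exploit the two global budget constraints $(n, R)$ and $(n+m+2, 3R)$ together with the structural fact that every payment duty has its cheap option $p_{y,1}$ paid in the first period (day $\le n$) and its expensive option $p_{y,2} = 2 p_{y,1}$ paid in the second period (day $> n$). Writing $x := \sum_{y \in \bar{X}} p_{y,1}$ for the total amount paid strictly before day $n+1$, the argument is a short squeeze: the budget at day $n$ forces $x \le R$, while the budget at day $n+m+2$, after substituting $p_{y,2} = 2 p_{y,1}$ and the normalization $\sum_{y \in T} p_{y,1} = 2R$, forces $x \ge R$.

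First I would observe that since every first-period payment date $a_{y,1}$ lies in $\{1, \dots, n\}$ and every second-period date $a_{y,2}$ lies in $\{n+1, \dots, n+m+2\}$, the total amount paid up to and including day $n$ is exactly $\sum_{y \in \bar{X}} p_{y,1} = x$; the items in $X$ contribute nothing before day $n+1$ because they pay their second option. The budget constraint $(n, R)$ therefore reads $x \le R$.

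Next I would compute the total amount paid up to day $n+m+2$, which accounts for every payment, namely $\sum_{y \in \bar{X}} p_{y,1} + \sum_{y \in X} p_{y,2}$. Using $p_{y,2} = 2 p_{y,1}$ and the identity $\sum_{y \in T} p_{y,1} = 2R$ (immediate from $R = \sum_i R_i$ together with $R_i = \sum_{t \in T_i} p_{t,1} = \sum_{t \in \bar{T}_i} p_{t,1}$), this total equals
\[
x + 2 \sum_{y \in X} p_{y,1} = x + 2 (2R - x) = 4R - x.
\]
The budget constraint $(n+m+2, 3R)$ then gives $4R - x \le 3R$, i.e.\ $x \ge R$. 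Combining with $x \le R$ yields $x = R$, so $\sum_{y \in \bar{X}} p_{y,1} = R$ and consequently $\sum_{y \in X} p_{y,1} = 2R - R = R$, as claimed.

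There is no serious obstacle here: the lemma is essentially the observation, already sketched in the main text, that the two budget constraints can only be jointly satisfied in equality. The only points requiring care are bookkeeping ones --- confirming the clean split of payment dates into the two periods so that the day-$n$ cumulative payment is precisely $\sum_{y \in \bar{X}} p_{y,1}$, and confirming the normalization $\sum_{y \in T} p_{y,1} = 2R$ that makes the coefficient of $x$ cancel correctly in the second inequality.
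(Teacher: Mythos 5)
Your proposal is correct and follows essentially the same route as the paper's own proof: both use only the two budget constraints $(n,R)$ and $(n+m+2,3R)$, the normalization $\sum_{y\in T}p_{y,1}=2R$ (from $R_i=\sum_{t\in T_i}p_{t,1}=\sum_{t\in\bar{T}_i}p_{t,1}$), and the same squeeze $x\le R$ and $4R-x\le 3R$ to force $x=R$. Your explicit verification that items in $X$ contribute nothing before day $n+1$ is a bookkeeping point the paper leaves implicit, but the argument is identical.
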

\begin{proof}
We only needs budget constraints for $n$ and $n+m+2$ to prove this lemma. Note that,
\[
\sum_{y\in \bar{X}}p_{y,1}+\sum_{y\in X}p_{y,1}=\sum_{i=1}^n\left(\sum_{y\in T_i}p_{y,1}+\sum_{y\in \bar{T}_i}p_{y,1}\right)=2R
\]
holds. From budget constraint for $n$,
\[
\sum_{y\in \bar{X}}p_{y,1}\leq R
\]
holds. From budget constraint for $n+m+2$,
\[
\sum_{y\in \bar{X}}p_{y,1}=4R-\left(\sum_{y\in \bar{X}}p_{y,1}+2\sum_{y\in X}p_{y,1}\right)\geq 4R-3R=R
\]
holds. It means equality holds for both of the inequality, and thus the lemma holds.
\end{proof}

\begin{proof}

Let us start by rephrasing some budget constraints.
First, we only concern about the coefficients of $B^4$ and $B^3$. From budget constraint for $i=1,\dots,n$,
\[
\sum_{y\in \bar{X}\cap (Y_1\cup\dots\cup Y_i)}(e_{y,4},e_{y,3}) \leq (3m(m+2)+1)i(1,1)
\]
holds for all $i=1,\dots,n-1$. Since $i(e_{y,4},e_{y,3})=(e_{y,2},e_{y,1})$ holds for all $y\in Y_i$, 

\begin{eqnarray}
(f_{n,2},f_{n,1})&=&\sum_{y\in \bar{X}}(e_{y,2},e_{y,1})\nonumber \\
&=&\sum_{i=1}^{n}\sum_{y\in \bar{X}\cap Y_i}i(e_{y,4},e_{y,3})\nonumber \\
&=&\sum_{i=1}^{n}\sum_{y\in \bar{X}\cap Y_i}n(e_{y,4},e_{y,3})-\sum_{i=1}^{n-1}\sum_{y\in \bar{X}\cap (Y_1\cup\dots\cup Y_i)}(e_{y,4},e_{y,3})\nonumber \\
&\geq&n\sum_{y\in \bar{X}}(e_{y,4},e_{y,3})-\sum_{i=1}^{n-1}(3m(m+2)+1)i(1,1)\nonumber \\
&=&((3m(m+2)+1)(n^2-\frac{(n-1)n}{2})(1,1)\nonumber \\
&=&(f_{n,2},f_{n,1})\nonumber
\end{eqnarray}
holds. So, 
\[
\sum_{y\in \bar{X}\cap (Y_1\cup\dots\cup Y_i)}(e_{y,4},e_{y,3}) = (3m(m+2)+1)i(1,1)
\]
holds for all $i=1,\dots,n-1$. For $i=n$ this equality also holds because of lemma \ref{halfeq}. That means, 
\[
\sum_{y\in \bar{X}\cap Y_i}(e_{y,4},e_{y,3}) = (3m(m+2)+1)(1,1)
\]
holds for all $i=1,\dots,n$. We can see that only in the situation $\bar{X}\cap Y_i=T_i$ and $\bar{X}\cap Y_i=\bar{T}_i$ this equation holds by definition of $e_{y,4}$ and $e_{y,3}$. Since $X$ is complement of $\bar{X}$, that completes the proof.
\end{proof}

Now we consider clauses. We define the set of items $Z_j$ for all $j=0,\dots,m+1$ as the set of items $y$ with $e_{y,0}=j$. Following proposition ensures that exactly one variable in each clause is true. We prove this proposition by a similar technique of the proof of proposition \ref{weldef}.

\begin{prp}\label{clause}
In a feasible solution of $\mathcal{I}'$, for all $j=2,\dots,m+1$, $|X\cap Z_j|=1$ holds.
\end{prp}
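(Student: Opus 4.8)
The plan is to mirror the proof of Proposition~\ref{weldef}, but reading off the coefficients of $B^{4}$ and $B^{0}$ of each $p_{y,1}$ rather than the coefficients of $B^{4},B^{3}$ and $B^{2},B$ used there. Three structural facts drive the argument, where I write $N=3m(m+2)+1$. First, every item has $e_{y,4}=1$, so the coefficient of $B^{4}$ in any sum of $p_{y,1}$'s just counts the items involved. Second, the second payment date satisfies $a_{y,2}=n+1+e_{y,0}$, so an item of $Z_{j}$ (the items with $e_{y,0}=j$) is paid on the second-period day $n+1+j$. Third, by Lemma~\ref{halfeq} we have $\sum_{y\in X}p_{y,1}=R$, whose $B^{4}$-coefficient equals $nN$ and whose $B^{0}$-coefficient equals $3m(m+1)n$; hence $|X|=|\bar X|=nN$ and $\sum_{y\in X}e_{y,0}=3m(m+1)n$.

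First I would turn the second-period budget constraints into upper bounds on partial counts. Setting $s_{j}=|X\cap(Z_{0}\cup\dots\cup Z_{j})|$, the amount paid by day $n+1+j$ is $\sum_{y\in\bar X}p_{y,1}+2\sum_{y\in X,\,a_{y,2}\le n+1+j}p_{y,1}$, whose $B^{4}$-coefficient is $nN+2s_{j}$. Because each budget value has the shape $c_{i}B^{4}+(B^{4}-1)$ and, by the choice of $B$, the lower-order coefficients never carry (so the lower part of any partial payment is at most $B^{4}-1$), the constraint on day $i$ collapses to the single inequality on its $B^{4}$-coefficient, namely $nN+2s_{i-n-1}\le c_{i}$. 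Writing $U_{j}=(c_{n+1+j}-nN)/2$, this reads $s_{j}\le U_{j}$ for $j=0,\dots,m$, while Lemma~\ref{halfeq} gives $s_{m+1}=|X|=nN$.

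Next I would pin a weighted total and force every inequality tight, exactly as in Proposition~\ref{weldef}; here the day offset $e_{y,0}=a_{y,2}-(n+1)$ plays the role of the first-period date. The relevant identity is $\sum_{y\in X}e_{y,0}=\sum_{j=0}^{m+1}j\,|X\cap Z_{j}|$, which summation by parts rewrites as $(m+1)|X|-\sum_{j=0}^{m}s_{j}$. Combining with $\sum_{y\in X}e_{y,0}=3m(m+1)n$ yields $\sum_{j=0}^{m}s_{j}=(m+1)nN-3m(m+1)n$. The crux — and the precise reason the constants in the budget constraints are what they are — is the arithmetic identity $\sum_{j=0}^{m}U_{j}=(m+1)nN-3m(m+1)n$; granting it, the bounds $s_{j}\le U_{j}$ together with $\sum_{j}s_{j}=\sum_{j}U_{j}$ force $s_{j}=U_{j}$ for all $j=0,\dots,m$.

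Finally I would difference the now-tight partial counts to get $|X\cap Z_{j}|=s_{j}-s_{j-1}=U_{j}-U_{j-1}$. Using the clause-day budget values $c_{i}=3nN-2(n+m+2-i)$ for $n+2\le i\le n+m+1$ and $c_{n+m+2}=3nN$, a short computation gives $U_{j}-U_{j-1}=1$ for every $j=2,\dots,m+1$, which is the claim. The main obstacle is thus bookkeeping rather than conceptual: I must verify the single identity $\sum_{j=0}^{m}U_{j}=(m+1)nN-3m(m+1)n$ and the clause-day differences $U_{j}-U_{j-1}=1$, and I must carefully justify that the $(B^{4}-1)$ slack together with the no-carry property of $B$ really does reduce each budget constraint to its $B^{4}$-coefficient.
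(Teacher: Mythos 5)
Your proposal is correct and takes essentially the same route as the paper's proof: the paper likewise collapses each second-period budget constraint to its $B^4$-coefficient, uses Lemma~\ref{halfeq} to pin the totals $\sum_{y\in X}e_{y,4}=nN$ and $\sum_{y\in X}e_{y,0}=3m(m+1)n$, applies the same summation-by-parts identity (written there as $\sum_j 2je_{y,4}=\sum_j 2(m+1)e_{y,4}-\sum_j\sum_{\mathrm{partial}}2e_{y,4}$) to force every partial-sum inequality tight, and then differences consecutive tight sums to get $|X\cap Z_j|=1$ for $j=2,\dots,m+1$. Your $s_j$/$U_j$ bookkeeping, including the identity $\sum_{j=0}^{m}U_j=(m+1)nN-3m(m+1)n$, is an explicit and correct restatement of the paper's inline equality chain evaluating to $6mn(m+1)$, and your justification of the digit-wise comparison via the $(B^4-1)$ slack and the no-carry choice of $B$ matches the convention the paper sets up at the start of Appendix~B.
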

\begin{proof}
First we concern about only the coefficient of $B^4$. From the budget constraints for $n+1,\dots,n+m+1$ and lemma \ref{halfeq},
\[
\sum_{y\in X\cap Z_0}2e_{y,4}\leq 6mn+2n+m(m+1)
\]
and for all $j=1,\dots,m$,
\[
\sum_{y\in X\cap (Z_0\cup\dots\cup Z_j)}2e_{y,4}\leq 2(3m(m+2)+1)n-2(m+1-j)
\]
holds. Since $e_{y,0}=je_{y,4}$ holds for all $y\in Z_j$,

\begin{eqnarray}
f_{n+m+2,0}-f_{n,0}&=&\sum_{y\in X}2e_{y,0}\nonumber \\
&=&\sum_{j=0}^{m+1}\sum_{y\in X\cap Z_j}2je_{y,4}\nonumber \\
&=&\sum_{j=0}^{m+1}\sum_{y\in X\cap Z_j}2(m+1)e_{y,4}-\sum_{j=0}^{m}\sum_{y\in X\cap (Z_0\cup\dots\cup Z_j)}2e_{y,4}\nonumber \\
&\geq&2(m+1)\sum_{j=0}^{m+1}\sum_{y\in X\cap Z_j}e_{y,4}-(6mn+2n+m(m+1))\nonumber \\
&&-\sum_{j=1}^{m}(2(3m(m+2)+1)n-2(m+1-j))\nonumber \\
&=&2(3m(m+2)+1)(m+1)n-(6mn+2n+m(m+1))\nonumber \\
&&-2(3m(m+2)+1)nm+2m(m+1)-m(m+1)\nonumber \\
&=&6mn(m+1)=f_{n+m+2,0}-f_{n,0}\nonumber
\end{eqnarray}
holds. So, for all $j=1,\dots,m$,
\[
\sum_{y\in X\cap (Z_0\cup\dots\cup Z_j)}2e_{y,4} = 2(3m(m+2)+1)n-2(m+1-j)
\]
holds. That means $|X\cap Z_j|= 1$ holds for all $j=2,\dots,m+1$.

\end{proof}

Now we understood the solution structure for $\mathcal{I}'$ enough to prove the hardness result. The following proposition forms the reduction.

\begin{prp}
Assume that feasible solution of $\mathcal{I}'$ is given. Then, there is a polynomial time algorithm to construct a feasible solution of $\mathcal{I}$. Conversely, given a feasible solution of $\mathcal{I}$, we can construct a feasible solution of $\mathcal{I}'$.
\end{prp}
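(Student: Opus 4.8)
The plan is to set up a two-way, polynomial-time correspondence between satisfying assignments of $\mathcal{I}$ and feasible solutions of $\mathcal{I}'$, layered on top of the structural facts already proved in Propositions~\ref{weldef} and~\ref{clause}. The first step I would make explicit is the link between the clause gadget and the sets $Z_j$. By construction an item lands in $Z_{j+1}$ (constant term $e_{y,0}=j+1$) exactly when it is $u_{i,j,k}$ with $c_{j,k}=x_i$ or $\bar u_{i,j,k}$ with $c_{j,k}=\bar x_i$; hence $Z_{j+1}$ consists of precisely the three items attached to the three literals of the $j$-th clause. Reading off the assignment from Proposition~\ref{weldef} via ``$x_i=\texttt{true}$ iff $X\cap Y_i=T_i$'', the item of $Z_{j+1}$ attached to a literal lies in $X$ if and only if that literal evaluates to \texttt{true}, so $|X\cap Z_{j+1}|$ equals the number of satisfied literals in clause $j$.

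\textbf{Forward direction.} Given a feasible solution of $\mathcal{I}'$ with associated set $X$, Proposition~\ref{weldef} guarantees that the assignment $x_i=\texttt{true}\Leftrightarrow X\cap Y_i=T_i$ is well defined, and this read-off is clearly polynomial time. Proposition~\ref{clause} yields $|X\cap Z_{j+1}|=1$ for every $j=1,\dots,m$, so by the correspondence above each clause has exactly one true literal; this is precisely the 1-IN-3SAT condition, so the assignment is a feasible solution of $\mathcal{I}$.

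\textbf{Backward direction.} Conversely, from a satisfying assignment I set $X=\bigcup_{x_i=\texttt{true}}T_i\cup\bigcup_{x_i=\texttt{false}}\bar T_i$, i.e.\ choose the second payment option exactly for the items in $X$; the remaining work is to verify every budget constraint. The first-period constraints ($1\le d\le n$) count $\sum_{y\in\bar X\cap(Y_1\cup\dots\cup Y_d)}p_{y,1}=\sum_{i'\le d}R_{i'}$, using $\sum_{y\in T_{i'}}p_{y,1}=\sum_{y\in\bar T_{i'}}p_{y,1}=R_{i'}$; this meets $q_d$ because the leading $B^4,B^3$ coefficients match and the $+(B^3-1)$ slack absorbs the lower-order terms, and at $d=n$ it holds with equality $R=q_n$. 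That equality makes the final day $d=n+m+2$ read $R+2R=3R=q_{n+m+2}$. For an intermediate second-period day $d=n+1+s$ the contributing total is $R+2\sum_{y\in X\cap(Z_0\cup\dots\cup Z_s)}p_{y,1}$, whose $B^4$ coefficient I would compare against $q_d$ using that exactly one literal per clause is true (so $|X\cap Z_s|=1$ for $s=2,\dots,m+1$) together with the definitions of the auxiliary items.

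The main obstacle I expect is the backward verification of the two ``adjustment'' days $n+1$ and $n+2$ (equivalently the $Z_0,Z_1$ bookkeeping): the auxiliary items $W_i,\bar W_i$ are split between these two days by the quantities $K_i,\bar K_i$, which were defined precisely so that the resulting $B^4$-coefficient sums land exactly on $q_{n+1}$ and $q_{n+2}$. Confirming that these padded counts align, simultaneously with the lower-order $B^3,B^2,B$ coefficients that are pinned down by the same symmetry forcing the first-period equalities, is the one place where a careful constraint-by-constraint coefficient computation is unavoidable; everything else reduces directly to the already-proven Propositions~\ref{weldef} and~\ref{clause}.
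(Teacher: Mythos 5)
Your proposal is correct and follows essentially the same route as the paper: the forward direction reads the assignment off $X\cap Y_i$ and appeals to Propositions~\ref{weldef} and~\ref{clause} (you in fact make the link between $Z_{j+1}$ and the literals of clause $j$ more explicit than the paper does), and the backward direction verifies the budget constraints coefficient by coefficient exactly as in the paper --- the first period via $\sum_{y\in T_i}p_{y,1}=\sum_{y\in \bar{T}_i}p_{y,1}=R_i$ with the $(B^3-1)$ slack, days $n$ and $n+m+2$ via $R$ and $3R$, and the clause days via $|X\cap Z_j|=1$. The one step you defer, the adjustment day $n+1$, is precisely the computation the paper carries out explicitly, namely $|X\cap Z_0|=3mn+n+\frac{m(m+1)}{2}$, which confirms the budget constraint $(3m(m+2)+1)n+6mn+2n+m(m+1)$ on the $B^4$ coefficient just as you predict from the choice of $K_i$ and $\bar{K}_i$.
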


\begin{proof}
Assume that we are given a feasible solution of $\mathcal{I}'$. For all $i=1,\dots,n$, we set $x_i$ to be true when $T_i\cup X=T_i$ and false otherwise. By the rule of construction and Propositions~\ref{weldef}, \ref{clause}, it is a feasible solution of $\mathcal{I}$.

Conversely, assume that we are given a feasible solution of $\mathcal{I}$. For all $i=1,\dots,n$, we choose $T_i$ if $x_i$ is true and $\bar{T}_i$ otherwise. We set $X$ to be the union of all chosen sets of items.

All we have to show is this solution satisfies all budget constraint. Since
\[
\sum_{y\in \bar{X}\cap (Y_1\cup\dots\cup Y_i)}(e_{y,4},e_{y,3})=(3m(m+2)+1)i(1,1)
\]
budget constraints for $i=1,\dots,n-1$ holds. Since
\[
\sum_{y\in \bar{X}}p_{y,1}=R
\]
budget constraints for $i=n$ and $i=n+m+1$ holds. By the rule of construction $|X\cap Z_j|=1$ for all $j=2,\dots,m+1$. So,
\[
\sum_{y\in \bar{X}}e_{y,4}+\sum_{y\in X\cap (Z_0\cup\dots\cup Z_j)}2e_{y,4}=3(3m(m+2)+1)n-2(m+1-j)
\]
holds and budget constraints for $i=n+2,\dots,n+m+1$ holds. Finally,
\begin{eqnarray}
|X\cap Z_0|&=&(3m(m+2)+1)n-|X\cap Z_1|-\sum_{j=2}^{m+1}|X\cap Z_j|\nonumber \\
&=&(3m(m+2)+1)n-3m(m+1)n+\sum_{j=2}^{m+1}\sum_{y\in Z_j}(e_{y,0}-1)\nonumber \\
&=&3mn+n+\frac{m(m+1)}{2}\nonumber
\end{eqnarray}
holds, then 
\[
\sum_{y\in \bar{X}}e_{y,4}+\sum_{y\in X\cap Z_0}2e_{y,4}\leq (3m(m+2)+1)n+6mn+2n+m(m+1)
\]
holds. So budget constraint for $i=n+1$ holds. That completes the proof.
\end{proof}

Above proposition completes the hardness proof. 

\end{document}